\documentclass[11pt]{article}

\usepackage[margin=1in]{geometry}
\usepackage{amsmath,amssymb,amsthm}
\usepackage{mathtools}
\usepackage{hyperref}
\hypersetup{hidelinks}
\usepackage{cleveref}
\usepackage{xcolor}

\newtheorem{theorem}{Theorem}[section]
\newtheorem{lemma}[theorem]{Lemma}
\newtheorem{proposition}[theorem]{Proposition}
\newtheorem{claim}[theorem]{Claim}
\newtheorem{observation}[theorem]{Observation}
\newtheorem{corollary}[theorem]{Corollary}

\theoremstyle{definition}

\newtheorem{definition}[theorem]{Definition}
\newtheorem{remark}[theorem]{Remark}

\AddToHook{env/lemma/begin}{\crefalias{theorem}{lemma}}
\AddToHook{env/proposition/begin}{\crefalias{theorem}{proposition}}
\AddToHook{env/claim/begin}{\crefalias{theorem}{claim}}
\AddToHook{env/observation/begin}{\crefalias{theorem}{observation}}
\AddToHook{env/corollary/begin}{\crefalias{theorem}{corollary}}

\AddToHook{env/definition/begin}{\crefalias{theorem}{definition}}
\AddToHook{env/remark/begin}{\crefalias{theorem}{remark}}

\newcommand{\R}{\mathbb{R}}
\newcommand{\Z}{\mathbb{Z}}
\newcommand{\F}{\mathbb{F}}
\newcommand{\E}{\mathbb{E}}

\newcommand{\argmax}{\operatorname*{argmax}}

\newcommand{\cC}{{\mathcal C}}
\newcommand{\cD}{{\mathcal D}}

\newcommand{\cN}{{\mathcal N}}
\newcommand{\cP}{{\mathcal P}}
\newcommand{\cQ}{{\mathcal Q}}

\newcommand{\cE}{{\mathcal E}}
\newcommand{\cF}{{\mathcal F}}

\newcommand{\Tr}{\mathrm{Tr}}

\renewcommand{\epsilon}{\varepsilon}
\renewcommand{\phi}{\varphi}

\newcommand{\IE}{\mathbb{E}}
\newcommand{\IF}{\mathbb{F}}

\newcommand{\IR}{\mathbb{R}}

\newcommand{\IP}{\mathbb{P}}

\DeclareMathOperator{\poly}{poly}
\DeclareMathOperator{\rk}{rk}

\DeclareMathOperator{\supp}{supp}

\title{Universal set families for maximization of \\ nonnegative submodular and XOS functions}

\author{Chandra Chekuri \\ University of Illinois \\ Urbana-Champaign
\and Richard Ueltzen \\ Stanford University
\and Jan Vondr\'ak \\ Stanford University}

\begin{document}

\maketitle

\begin{abstract}
We consider the question of designing a universal family of sets $\cF \subset 2^{[n]}$ 
such that for any function $f:2^{[n]} \to \R_{\geq 0}$ in a certain class, we have
$$\max_{S \in \cF} f(S) \geq c(n) \cdot \max_{S \subset [n]} f(S).$$

We prove that there is a family of subpolynomial size such that for any nonnegative submodular function, $c(n) = \Omega(\frac{\log \log n}{\log n})$, and there is a family of logarithmic size such that $c(n) = \Omega(\frac{1}{\log n})$. We also prove that pairwise independence (which achieves a constant factor for graph cut functions), or even $k$-wise independence, does not imply a bound better than $O(\frac{1}{\sqrt{\log n}})$ for submodular functions. On the other hand, we prove that for any polynomially representable subclass of nonnegative submodular functions (such as the matroid connectivity functions for matroid representable over $\F_q$), a constant-factor universal family of polynomial size always exists.

For absolute XOS functions (a class that we introduce, in the form $f(S) = \max_i |\sum_{j \in S} w_{ij} + c_i|$ where $w_{ij}, c_i \in \R$), we design a family of polynomial size such that $c(n) \geq \sqrt{\frac{\log n}{n}}$, and prove that there is no polynomial-size family achieving a factor better than $O(\sqrt{\frac{\log n}{n}})$. 
\end{abstract}

\section{Introduction}

Discrete optimization in general aims to optimize an objective function $f(S)$ over a discrete set of solutions $S \in \cF$. The set of solutions often has a non-trivial combinatorial structure, but sometimes the complexity of the problem is completely encapsulated in the objective function $f$. A classical example is the Max Cut problem, where $\cF = 2^V$, the set of all subsets of vertices of a graph $G$, and the objective function is $f(S) = |E(S,V \setminus S)|$, the number of edges crossing the partition between $S$ and $V \setminus S$. 
This is the type of discrete optimization problem we study in this paper: optimization of a non-trivial set function $f:2^V \to \R_{\geq 0}$ over all subsets $S \subset V$. Beyond specific combinatorial problems, attention has focused on classes of objective functions with certain algebraic properties. A natural class (capturing for example the Max Cut problem, in both undirected and directed graphs), is the class of {\em submodular functions}.

\begin{definition}
$f:2^V \to \R$ is submodular if $f(S \cup T) + f(S \cap T) \leq f(S) + f(T)$ for all $S,T \subset V$.
\end{definition}

An alternative definition of submodularity is diminishing marginal utilities: $f(A \cup \{v\}) - f(A) \ge f(B \cup \{v\})-f(B)$ for all $A \subset B$ and $v \not \in B$. A function is \emph{monotone} if $f(A) \le f(B)$ for all $A \subset B$ and normalized if $f(\emptyset) = 0$. We use the term nonmonotone if the function is not necessarily monotone. $f$ is symmetric if $f(A) = f(V \setminus A)$ for all $A \subset V$. 

In the following, we always assume that $f$ is nonnegative.
We assume the most basic model of access to $f$ which is through a value oracle, returning $f(S)$ for a given set $S$. By the seminal results of \cite{GLS,Schrijver00,IwataFF01}, there is a polynomial-time algorithm for unconstrained submodular function
{\em minimization}. In contrast, the unconstrained submodular maximization problem is NP-hard (since it captures the Max Cut problem). It admits a $1/2$-approximation algorithm which is best possible in the value oracle model \cite{BuchbinderFNS15,FeigeMV07}. When $f$ is monotone the unconstrained maximization problem is trivial since $V$ is the optimum solution. The simplest non-trivial constrained optimization problem is to maximize a monotone $f$ subject to a cardinality constraint: $\max \{ f(S): |S| \leq k \}$.
A classical result \cite{NWF78,NW78} shows that a simple greedy algorithm yields a $(1-1/e)$-approximation for this problem and moreover this is tight \cite{NWF78,Feige98}. Maximization of submodular functions subject to various constraints has been subject to extensive research and  we do not attempt to survey here. 

Beyond the approximability of various submodular optimization problems, the following question was posed in \cite{BalkanskiS18}: Is it possible to approximate the optimal solution if we can access the objective function $f(S)$ in a limited way, in rounds of parallel queries $\cQ_1, \cQ_2, \ldots$ where each round of queries is executed in parallel and depends only on the answers to previous rounds of queries. The number of rounds should be small, ideally a constant or logarithmic in the size of the input. The term \emph{adaptivity} was coined to understand this round complexity.
The motivation comes from applications in which queries to $f(S)$ are the main bottleneck, due to necessary experiments or information gathering which is slow but can be executed efficiently in parallel. One can also see that the notion of adaptivity has strong connections to parallelization.
This model was initially considered for the cardinality-constrained problem $\max \{f(S): |S| \leq k \}$, with $f$ monotone submodular, and \cite{BalkanskiRS19a} showed that one can achieve a $(1-1/e-\epsilon)$-approximation in $O(\log n \log k \poly(1/\epsilon))$ rounds of queries. \cite{BalkanskiS18} also showed that a logarithmic number of rounds is necessary to achieve any constant factor and the lower bound was refined in \cite{li2020polynomial}. Several results on adaptivity for other constraints were obtained subsequently \cite{balkanski2019optimal,chekuri2019parallelizing,ene2019submodular}.

In the unconstrained (non-monotone) setting, it was shown that a constant number of rounds suffices to obtain a constant factor: more precisely, one can achieve a $(1/2 - \epsilon)$-approximation in $O(1/\epsilon)$ rounds of queries \cite{ChenFK19}. This leads to a natural question: \emph{Can one achieve a constant approximation without any adaptivity at all?}

In some sense, this question has a positive answer, due to the following well-known lemma:
For a uniformly random $R \subset V$, $\E[f(R)] \geq \frac14 \max_{S \subset V} f(S)$ \cite{FeigeMV07}.
Hence, in expectation, a random set already gives a constant-factor approximation, and by sampling a polynomial number of sets, we can find a solution of good value with high probability. However, we do not have a certificate that our algorithm succeeded and there is no way to verify the quality of the solution if we cannot ask further queries. 
Hence, the question we ask in this paper\footnote{(and this question has been posed in workshops on submodular optimization over the years)} is:

\medskip
{\em Is there a deterministic non-adaptive algorithm, i.e. an algorithm querying only one universal (predetermined) family of polynomially many sets, which achieves a constant-factor times the maximum for every nonnegative submodular function?}
\medskip

\subsection{Our results}

We prove the first non-trivial results on the question of universal approximating families, for nonnegative submodular functions and absolute XOS functions (which we define in this paper).

\paragraph{Universal families for submodular functions.}
For submodular functions, we prove the following.

\begin{theorem}\label{thm:submod-log}
For every $n \geq 2$, there is a family $\cF_n$ of size $|\cF_n| \leq 2 \lceil \log_2 n \rceil$ such that
for every submodular function $f:2^{[n]} \to \R_{\geq 0}$,
$$ \max_{S \in \cF_n} f(S) \geq \frac{1}{2 \lceil \log_2 n \rceil} \max_{S \subset [n]} f(S).$$
\end{theorem}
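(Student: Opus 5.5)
The plan is to use the binary-encoding family. Let $k=\lceil \log_2 n\rceil$, identify each $j\in[n]$ with a distinct string $b(j)\in\{0,1\}^k$, and for each coordinate $i\in[k]$ put into $\cF_n$ the two complementary sets $A_i=\{j: b(j)_i=1\}$ and $B_i=[n]\setminus A_i$. This gives $|\cF_n|\le 2k$. The key structural fact is that the family separates points: any two distinct elements lie on opposite sides of some pair $(A_i,B_i)$. Equivalently, for every $j$ the intersection of the $k$ family sets containing $j$ is exactly $\{j\}$.

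The analytic tools are two elementary consequences of submodularity plus nonnegativity.
\begin{itemize}
\item Subadditivity: $f(X\cup Y)\le f(X)+f(Y)-f(X\cap Y)\le f(X)+f(Y)$.
\item Intersection bound: $f(X\cap Y)\le f(X)+f(Y)-f(X\cup Y)\le f(X)+f(Y)$.
\end{itemize}
Fix an optimal set $O$. The aim is to prove
$$f(O)\le \sum_{i=1}^k \bigl(f(A_i)+f(B_i)\bigr),$$
which gives the theorem by averaging over the $2k$ sets. I would prove this by induction on $k$ via a telescoping argument, rather than by naively expanding $O$ into unions and intersections, since that expansion multiplies costs.

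To set up the induction, process the coordinates one at a time. Define $O_0=O$ and, for $t\ge1$, let $O_t$ be obtained from $O_{t-1}$ by a step that uses only the pair $(A_t,B_t)$. The hoped-for invariant is
$$f(O_{t-1})\le f(O_t)+f(A_t)+f(B_t)-(\text{a correction term}),$$
where the correction terms are nonnegative and are retained rather than discarded. They are what keeps the sum from blowing up. The natural candidate step replaces $O_{t-1}$ by $O_{t-1}\cap A_t$ or $O_{t-1}\cap B_t$, whichever keeps more value. Its cost comes from the exact form $f(X\cap Y)\le f(X)+f(Y)-f(X\cup Y)$ applied with $Y\in\{A_t,B_t\}$. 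After $k$ steps, $O_k$ lies in one atom of the partition generated by all the pairs, so it is a singleton or empty, and $f(O_k)$ must itself be charged to some family set; nonnegativity would be used here.

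The main obstacle is that this step loses value, so $f(O_{t-1})$ is not bounded by $f(O_t)$ plus family terms without an extra loss at every level. What I would try first is a symmetric version of each step. Use
$$f(O_{t-1}\cap A_t)+f(O_{t-1}\cup A_t)\le f(O_{t-1})+f(A_t)$$
together with the analogous inequality for $B_t$. Then sum the two and use $A_t\cap B_t=\emptyset$, $A_t\cup B_t=[n]$, and subadditivity to get
$$f(O_{t-1})\le f(O_{t-1}\cap A_t)+f(O_{t-1}\cap B_t).$$
This is valid, but the recursion it produces branches. The induction therefore needs to follow only one branch per level, charging the other branch to $f(A_t)$ or $f(B_t)$ through the complementary inequality above with the union term dropped.

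If following one branch per level cannot be made to work, the fallback is a probabilistic charging argument. The idea is to show that for a random side choice $D_i\in\{A_i,B_i\}$ at each coordinate, the expected losses telescope to at most $\sum_i (f(A_i)+f(B_i))$. Getting these charges to telescope without a multiplicative blow-up is where I expect most of the work to lie.
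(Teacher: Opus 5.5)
Your family is the paper's (coordinate cuts together with their complements), and your target $f(O)\le\sum_{i=1}^k\bigl(f(A_i)+f(B_i)\bigr)$ is in fact true. But the proposal never proves it. The inductive step is only a hoped-for invariant, and every concrete version you suggest fails because it uses $f(O_t)$ itself as the potential.

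\textbf{A counterexample to your invariant.} Take $V=\{0,1\}^k$, let $f$ be the cut function of the perfect matching $\{x,x\oplus e_k\}$, and let $O=A_k$. For a general processing order, put the matching along the coordinate processed last.
\begin{itemize}
\item For $t<k$ we have $f(A_t)=f(B_t)=0$, while $f(X)=|X|$ for every $X\subseteq A_k$.
\item So each step $O_t=O_{t-1}\cap D_t$ with $D_t\in\{A_t,B_t\}$ and $t<k$ halves the value, and your invariant $f(O_{t-1})\le f(O_t)+f(A_t)+f(B_t)$ reads $2^{k-t}\le 2^{k-t-1}$.
\item The same example defeats the random-side fallback: the expected loss $f(O_{t-1})-\mathbb{E}[f(O_t)]=f(O_{t-1})/2$ must be paid from a budget of zero.
\item Keeping both branches of $f(O)\le f(O\cap A_t)+f(O\cap B_t)$ only yields $f(O)\le\sum_{j\in O}f(\{j\})$.
\item Charging the dropped branch via $f(O\cap B_t)\le f(O)+f(B_t)$ is circular, since $f(O)$ reappears on the right.
\end{itemize}
Incidentally, that two-branch inequality is plain subadditivity. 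Summing the two submodular inequalities you wrote bounds $f(O\cap A_t)+f(O\cap B_t)$ from above, not below.

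\textbf{The missing idea: change the function along with the set.}
\begin{itemize}
\item The paper first reduces to symmetric $f$ with $f(\emptyset)=0$. It applies the result to $f^V_{\mathrm{sym}}(A)=f(A)+f(V\setminus A)-f(V)-f(\emptyset)$; closing the family under complements costs the factor $2$.
\item When splitting along $T=S_d$, it measures each side by the restricted connectivity functions $f^T_{\mathrm{sym}}(X)=f(X)+f(T\setminus X)-f(T)-f(\emptyset)$ on $T$ and $f^{T^c}_{\mathrm{sym}}$ on $T^c$.
\item Symmetry and submodularity give, for all $A$,
\[ f(A)-f(T)\le\tfrac12\bigl(f^T_{\mathrm{sym}}(A\cap T)+f^{T^c}_{\mathrm{sym}}(A\cap T^c)\bigr)\le f(A). \]
\item The left inequality pays $f(T)$ for the split while keeping both branches at weight $\tfrac12$.
\item Induction on the two restricted functions with the sets $S_i\cap T$ and $S_i\cap T^c$, followed by the right inequality at $A=S_i$, bounds the two halves by $\sum_{i<d}f(S_i)$.
\item At the bottom of the recursion, the restricted functions vanish on atoms.
\end{itemize}
This gives $f(A)\le\sum_i f(S_i)$ on the whole generated algebra, hence the theorem.

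Your probabilistic telescoping becomes exactly this proof if the potential at depth $t$ is the $t$-fold symmetrized restriction to the random atom, instead of $f(O_t)$. For a cut function, that potential is $2^t$ times the cut of the induced subgraph on the atom, which discards boundary edges.
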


Assuming $n = 2^d$ and viewing $[n]$ as $\{0,1\}^d$, the specific family $\cF_n$ attaining this result is the family of coordinate-wise layers $\{0,1\}^d$.
With families of polynomial size, we can achieve a slightly improved result.

\begin{theorem}
\label{thm:submod-loglog}
For every $n \geq 2$, there is a family $\cF_n$ of size $|\cF_n| = n^{o(1)}$ such that
for every submodular function $f:2^{[n]} \to \R_{\geq 0}$,
$$ \max_{S \in \cF_n} f(S) \geq \frac{\log_2 \log_2 n}{10 \log_2 n} \max_{S \subset [n]} f(S).$$
\end{theorem}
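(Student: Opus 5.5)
The plan is to generalize the construction behind Theorem~\ref{thm:submod-log} from the binary cube $\{0,1\}^d$ to a larger alphabet. Choose a base $b$ and view $[n]$ (after padding, which can only help since we may restrict $f$) as a subset of $[b]^d$ with $d = \lceil \log_b n \rceil$. The family $\cF_n$ consists of all ``coordinate cylinders'' $\{x \in [b]^d : x_k \in T\}$ for $k \in [d]$ and $T \subset [b]$. Its size is at most $d \cdot 2^b$. Taking $b \approx \log_2 n / \log_2\log_2 n$ gives $2^b = n^{1/\log_2\log_2 n} = n^{o(1)}$. Then $\log_2 b \geq \tfrac12 \log_2\log_2 n$ for large $n$, so $d \leq 2\log_2 n/\log_2\log_2 n + 1$. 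The target factor is $\Omega(1/d)$, and the constant $10$ leaves room for the losses below; small $n$ can be handled by taking all subsets.

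The core step is a decomposition inequality of the following shape. For every nonnegative submodular $f$ and every set $O \subset [b]^d$,
$$ f(O) \;\leq\; \sum_{k=1}^{d} \max_{T \subset [b]} f(\{x : x_k \in T\}) \;+\; (\text{lower-order terms}), $$
with at most a constant number of terms per coordinate. In the binary case this should be exactly what yields the bound $2\lceil \log_2 n\rceil$ in Theorem~\ref{thm:submod-log}: two sets per coordinate, namely $\{x_k=0\}$ and $\{x_k=1\}$. The point of a larger alphabet is that each coordinate now ``resolves'' a factor $b$ of the ground set rather than a factor $2$, while we pay only $2^b$ in family size. Since for each coordinate we test every $T \subset [b]$, we are optimizing exactly over the induced function on $b$ super-elements, and no approximation loss is incurred inside a coordinate.

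To prove the decomposition, I would refine coordinate by coordinate. Let $P_k$ be the partition of $[b]^d$ into cells on which the first $k$ coordinates are fixed. I would compare $O$ with the sets $O_k$ obtained from $O$ by ``smearing'' along the remaining coordinates. Concretely, I would take $O_k$ to be the union of the cells of $P_k$ whose intersection with $O$ is chosen according to a random or greedy rule, so that $O_0 \in \{\emptyset, [b]^d\}$ and $O_d = O$. Each step $O_{k-1}\to O_k$ changes membership only through the $k$-th coordinate, and I would use submodularity to bound $f(O_k) - f(O_{k-1})$ by a constant times $\max_T f(\{x_k \in T\})$. The tools here are the identity $f(A)+f(\emptyset) \le f(A\cap B) + f(A \setminus B)$, nonnegativity (which lets us drop the $f(\emptyset)$ and complement terms), and the fact that restricting to a coordinate cylinder and its complement is again a coordinate cylinder. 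Telescoping then gives $f(O) \leq C d \max_{S\in\cF_n} f(S)$.

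The main obstacle is making the per-coordinate step genuinely independent of the earlier coordinates. The marginal change at step $k$ lives inside the cells of $P_{k-1}$, but the family only contains sets cylindrical in a single coordinate. I expect to need to average over the choice of $T$, or over random completions of the other coordinates, and to use the random-set lemma of \cite{FeigeMV07} ($\E f(R) \ge \tfrac14 \max f$) applied to suitable restricted submodular functions. This would convert ``cell-dependent'' choices into a single $T$ at constant loss. If this averaging fails, my fallback is to reuse verbatim whatever inequality proves Theorem~\ref{thm:submod-log}, applied to the $b$-ary digit expansion, with each digit encoded by $\lceil\log_2 b\rceil$ bits grouped together.
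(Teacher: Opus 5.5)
\textbf{There is a genuine gap.} Your family (coordinate cylinders in $[b]^d$) and your parameters are fine; with them, a bound $\max f\le 4d\max_{S\in\cF}f(S)$ would give the stated factor for large $n$. But the decomposition inequality is the entire content of the theorem, and you do not prove it.

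\textbf{Why the telescoping step fails as stated.} Your step $f(O_k)-f(O_{k-1})\le C\max_T f(\{x: x_k\in T\})$ compares two global sets. They differ by pieces $C\cap\{x_k\in T_C\}$, where $T_C$ depends on the cell $C$ of $P_{k-1}$. Splitting by submodularity only bounds things by $f$ of such pieces, and $f$ of a piece includes the boundary of its cell, which single-coordinate cylinders do not see. Natural chains do fail. On $\{0,1\}^2$, let $f$ be the cut function of the two edges $\{00,10\}$ and $\{01,11\}$. Every cylinder $\{x_2\in T\}$ has value $0$, but $O=\{00,11\}$ has $f(O)=2$. Both cells of $P_1$ meet $O$ in exactly half, so any rule based on how much of a cell $O$ covers gives $O_1\in\{\emptyset,V\}$. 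An independent random rule gives $\E[f(O_2)-f(O_1)]=1$. So the chain would have to depend on $f$, and you give no argument that a good one exists. The random-set lemma $\E f(R)\ge\frac14\max f$ does not help: it only trades $\max f$ for $\E f(R)$ with $R$ uniform over all of $2^V$, which is the same problem again.

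\textbf{The fallback also fails.} Applying the binary inequality $f(A)\le\sum_j f(S_j)$ to the $d\lceil\log_2 b\rceil$ bit-cylinders charges each bit separately. It yields only $f(A)\le d\lceil\log_2 b\rceil\max_{S\in\cF}f(S)\approx\log_2 n\cdot\max_{S\in\cF}f(S)$, i.e.\ Theorem~\ref{thm:submod-log}. The $\log\log n$ gain requires an inequality that charges a whole coordinate (a $\sigma$-algebra with $b$ atoms) by $O(1)$ terms.

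\textbf{The missing idea.} First reduce to submodular connectivity functions (symmetric, $f(\emptyset)=0$); this costs a factor $2$, and your cylinder family is already closed under complement. Then measure a piece inside a cell $T$ not by $f$ but by the restricted symmetrization $f^T_{\mathrm{sym}}(X)=f(X)+f(T\setminus X)-f(T)-f(\emptyset)$. This subtracts exactly the cell boundary. For such $f$ one has
\[
f(A)-f(T)\;\le\;\tfrac12\bigl(f^T_{\mathrm{sym}}(A\cap T)+f^{T^c}_{\mathrm{sym}}(A\cap T^c)\bigr)\;\le\; f(A)
\]
(Lemma~\ref{restriction-inequalities}). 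The left inequality splits $A$ along a cell at cost $f(T)$; the right one recombines restricted values into a global set.

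The paper then proves $\max_{A\in\sigma(\cP_1,\dots,\cP_d)}f(A)\le 2\sum_i\E_{S\in\cP_i}f(S)$ (Lemma~\ref{partition-refinment-bound}) by induction, splitting off one atom $T$ of $\cP_1$ at a time. Stating the bound with expectations rather than maxima is what removes the cell-dependence. A uniform random $S\in\cP_i$ restricts to uniform random sets of $\cP_i^T$ and $\cP_i^{T^c}$, so the right-hand inequality averages the restricted terms back to $2\E_{S\in\cP_i}f(S)$. The splitting cost $f(T)$ is absorbed for two reasons: $f^T_{\mathrm{sym}}$ vanishes on $\cP_1^T=\{\emptyset,T\}$, and $\E_{S\in\cP_1}[f(S\setminus T)+f(S\cup T)]=2\E_{S\in\cP_1}f(S)$.

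This yields $\max f\le 4d\max_{S\in\cF}f(S)$ for every nonnegative submodular $f$. Your parameter choice, or the paper's $r\approx(\log_2 n)^{4/5}$, then finishes the proof.
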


The family here is obtained by expressing $n$ as $r^d$ and considering unions of parallel coordinate-wise layers in $[r]^d$.
The choice of $r \simeq \log n$ and $d \simeq \frac{\log n}{\log \log n}$ leads to our result. We remark that the family can be made as small as $|\cF_n| \simeq 2^{\log_2^\alpha n}$ for any $\alpha \in (0, \log_2 n / \log_2 \log_2 n]$, for an approximation factor of $\frac{\alpha \log_2 \log_2 n}{8 \log_2 n}$.
Note that the trade-off is not very sensitive to $|\cF_n|$, with asymptotically the same factor for all $\alpha > 0$.

Both of these results can be viewed as constructions of set families $\cF$ such that each pair of elements $i \neq j$ is separated by some set $S \in \cF$. Considering the Max Cut problem, this is a necessary condition for achieving any approximation of the maximum for submodular functions. 
By the symmetry of the problem, it is a natural idea to consider families supporting a {\em uniformly pairwise independent distribution}.

\begin{definition}
\label{def:pairwise-indep}
A distribution over a family $\cF \subset 2^{[n]}$ with probabilities $p_S \geq 0$, $\sum_{S \in \cF} p_S = 1$, is called uniformly pairwise independent, if for each $i \neq j \in [n]$ and $L \subset \{i, j\}$, we have 
$$\sum_{S \in \cF: \{i, j\} \cap S = L} p_S = \frac14.$$
\end{definition}

Uniformly pairwise independent distributions exist on a support of linear size (e.g. one given by the Hadamard matrix), and it is a folklore fact that a uniformly pairwise independent distribution is sufficient to extract a factor of $\frac12$ for the Max Cut problem, and $\frac14$ for Directed Max Cut. 
Perhaps this is true even for submodular functions in general? In other words, is the expectation of a submodular function over a uniformly pairwise independent distribution always within a constant factor of the expectation over the uniform distribution (which is within a constant factor of the maximum)?
Unfortunately, we prove that this is not the case.

\begin{theorem}
\label{thm:pairwise-indep}
For infinitely many $n>1$, there is a uniformly pairwise independent distribution $\cD$ over $2^{[n]}$ 
supported on $\cF$
and a symmetric submodular function $f:2^{[n]} \to \R_{\geq 0}$ such that
$$ \max_{S \in \cF} f(S) \leq O\left(\frac{1}{\sqrt{\log n}}\right) \max_{S \subset [n]} f(S).$$
\end{theorem}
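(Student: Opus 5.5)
The plan is to choose the distribution and the function together, not to fix a standard pairwise independent family first. The only constraint on the distribution is that for every pair $i\neq j$ the indicators $1_S(i),1_S(j)$ are uniform and independent. Its higher-order correlations are free, and the function $f$ should be designed to detect them. Concretely, I would partition $[n]$ into blocks $B_1,\dots,B_N$ of size $m$ and take
$$f(S)=\sum_{\ell=1}^N \phi\bigl(|S\cap B_\ell|\bigr),$$
where $\phi:\{0,\dots,m\}\to\R_{\ge0}$ is concave with $\phi(k)=\phi(m-k)$. Such an $f$ is nonnegative, symmetric and submodular, as a sum of concave functions of cardinalities, and $\max_S f(S)=N\phi(m/2)$. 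I want $\phi$ for which $\E\,\phi(|S\cap B|)$ is small whenever $|S\cap B|$ has mean $m/2$ and variance exactly $m/4$, which is all pairwise independence fixes on a block.

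Such a $\phi$ cannot be unbounded-slope like $\min(k,m-k)$, since any pairwise independent $|S\cap B|$ deviates only $O(\sqrt m)$ from $m/2$. Instead I would take $\phi$ to be a truncated-Gaussian-type or logarithmic profile. The natural choice is $\phi(k)=\min(k,m-k,\tau)$ with $\tau$ tuned, or $\phi(k)$ growing like $\log$ of the distance to the ends. Inside each block I would then use a distribution that is pairwise independent but puts most of its mass on very unbalanced sets: with probability $1-\delta$ the set $S\cap B$ is "almost empty or almost full", and with small probability $\delta$ it is a large balanced-deviation set. The extremes of size $m$ compensate the variance, so one can still match mean $m/2$ and pairwise covariance $0$ with $\delta\approx$ small. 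Pairwise independence within a block reduces to a linear feasibility problem over exchangeable distributions on $\{0,1\}^m$, i.e.\ a distribution on $k=|S\cap B|$ with $\E k=m/2$ and $\E k(k-1)=m(m-1)/4$. Blocks are drawn independently, or via a product with a pairwise independent family across blocks, so that the total family is uniformly pairwise independent in the sense of \Cref{def:pairwise-indep}.

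Next comes the optimization and the choice of $m$ relative to $n$. I would solve the one-block problem: minimize $\max_{k\in\supp}\phi(k)$ over distributions on $k$ with the two moment constraints. The maximum over the support matters, not the expectation, since \Cref{thm:pairwise-indep} bounds $\max_{S\in\cF}$. To make the support of the whole family small in value, the support in each block must consist only of $k$ with $\phi(k)$ small. Taking $\phi$ concave with $\phi(0)=0$ and using two-point or three-point support on $\{0,\,k_0,\,m-k_0,\,m\}$, the second-moment constraint forces $k_0$ to be near $m/2$ unless the mass near $0$ and $m$ is sizable. I expect the balance to come out as $\phi(k_0)/\phi(m/2)\approx 1/\sqrt{\log m}$ for a suitably chosen concave $\phi$, e.g.\ $\phi(k)=\sqrt{\log(1+\min(k,m-k))}$-type, with $m$ a power of $n$. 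This gives $O(1/\sqrt{\log n})$.

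The main obstacle is making the worst set in the support, not the average, small. Summing over independent blocks would put all combinations of per-block values in the support, and the maximum sums the per-block maxima. So the per-block support must itself only contain low-$\phi$ values, i.e.\ a single block ($N=1$, $m=n$) may be best. Then the whole difficulty is the one-dimensional moment problem above: find a symmetric concave $\phi$ and a pairwise independent exchangeable distribution supported on sizes $k$ with $\phi(k)\le O(\phi(n/2)/\sqrt{\log n})$. I would first try $\phi(k)=\sqrt{\log(1+\min(k,n-k))}$, or a $\log$-profile. I would check whether the support $\{0,n\}\cup\{n/2\pm t\}$ is forced, which would kill it. If it is forced, I would fall back to non-exchangeable families, namely Hadamard/affine-hyperplane constructions over $\F_2^d$ combined with a function summing concave profiles over subcubes, where hyperplane intersections take only the values $0,|B|/2,|B|$.
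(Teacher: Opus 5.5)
\textbf{Gap: the proposed function class cannot work.} No choice of $\phi$, block sizes, or distribution fixes this. Every function $f(S)=\sum_\ell \phi_\ell(|S\cap B_\ell|)$ with $\phi_\ell$ nonnegative, concave and symmetric is approximated within a constant factor by \emph{every} uniformly pairwise independent distribution.

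You noticed that $|S\cap B|$ stays within $O(\sqrt m)$ of $m/2$. Concavity turns this into a lower bound for every profile, not just tent-shaped ones:
\begin{itemize}
\item Concavity and symmetry give the chord bound $\phi(k)\ge \bigl(1-\tfrac{2|k-m/2|}{m}\bigr)\max\phi$ on $\{0,\dots,m\}$.
\item Pairwise independence forces $\E\,|S\cap B|=m/2$ and $\E\bigl(|S\cap B|-m/2\bigr)^2=m/4$. Hence $\E\,\bigl||S\cap B|-m/2\bigr|\le\sqrt m/2$.
\item Combining, $\E\,\phi(|S\cap B|)\ge(1-1/\sqrt m)\max\phi$.
\item Summing over blocks (size-$1$ blocks have constant $\phi_\ell$) gives $\max_{S\in\cF}f(S)\ge\E_{\cD}f(S)\ge(1-1/\sqrt2)\max_S f(S)$.
\end{itemize}

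This breaks each part of your plan:
\begin{itemize}
\item The single-block moment problem you reduce to is infeasible. Some set in the support has $\bigl||S|-n/2\bigr|\le\sqrt n/2$, so its value is at least $(1-1/\sqrt n)\max f$.
\item Your within-block distribution has the moments backwards. Nearly empty or nearly full sets \emph{increase} the variance. Since $\E(k-m/2)^2=m/4$ forces $\IP(|k-m/2|\ge m/4)\le 4/m$, such sets can carry only $O(1/m)$ mass, not $1-\delta$.
\item Flatter profiles such as $\sqrt{\log(1+\min(k,m-k))}$ only push the ratio closer to $1$.
\item The fallback (hyperplanes with concave profiles over subcubes) fails for the same reason, because the bound holds for any pairwise independent distribution. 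Concretely, a random affine hyperplane bisects a fixed $j$-dimensional subcube with probability $1-2^{-j}$.
\end{itemize}

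\textbf{Missing idea: the structure must sit in $f$, and $f$ must see more than block cardinalities.} The paper keeps the standard family: affine hyperplanes of $\F_2^d$ together with $\emptyset,V$, whose uniform distribution is pairwise independent. It takes $f(S)=\rho(S)+\rho(V\setminus S)-\rho(V)$, where $\rho$ is the rank function of the binary matroid represented by the evaluation matrix $M$ of all monomials of degree at most $m$ on $\F_2^d$, with $d=2m+1$.
\begin{itemize}
\item Since $MM^T=0$ and $\rk M=n/2$, the complement of any column basis $A$ is again a basis. Hence $\max f=f(A)=n/2$.
\item The span of these monomials is affinely invariant, and on a hyperplane $H\cong\F_2^{d-1}$ it restricts to the degree-$\le m$ polynomials there. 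So $\rho(H)=\rho(V\setminus H)=\sum_{i\le m}\binom{d-1}{i}$.
\item Therefore $f(H)=\binom{2m}{m}=O(n/\sqrt{\log n})$.
\end{itemize}
Hyperplanes look random to every pair of points, but this $f$ detects their algebraic structure. No sum of concave functions of block cardinalities can do that.
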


The construction is algebraic, based on binary matroids related to Reed-Muller codes -- with a representation defined by evaluations of monomials over $\Z_2^d$.
We note that the arising submodular functions {\em polynomially representable} -- encoded by a matrix over $\F_2$. We prove that for any such subclass of submodular function, a constant factor can be actually achieved, by a simple union-bound argument. Nevertheless, the respective universal family is not explicit.

\begin{theorem}
\label{thm:poly-representable}
For any polynomially representable subclass $\cC$ of nonnegative submodular functions and any $\epsilon>0$, there is a universal family for each $n$ of size $|\cF_n| = poly(n)$ such that for every $f \in \cC_n$,
$$ \max_{S \in \cF_n} f(S) \geq \left(\frac{1}{4} - \epsilon \right) \max_{A \subset [n]} f(A).$$
\end{theorem}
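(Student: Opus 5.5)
We take $\cF_n$ to be a family of $m = \poly(n)$ independent uniformly random subsets $R_1,\dots,R_m \subset [n]$. We then show that with positive probability it is universal for every $f \in \cC_n$ at once; in particular such a family exists, though not explicitly. Since the paper has not yet fixed the definition, we read "polynomially representable" as saying that $\cC_n$ has at most $2^{p(n)}$ members for some polynomial $p$, after the normalization below. For instance, a connectivity function of an $\F_q$-representable matroid is determined by a matrix with $\poly(n)$ entries from $\F_q$, which gives at most $q^{\poly(n)}$ functions. Since the claimed inequality is invariant under scaling $f$, the class can be taken up to scaling. The argument uses only this counting bound together with the random-set lemma of \cite{FeigeMV07}.

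\textbf{Step 1 (a single function).} Fix $f \in \cC_n$ and let $OPT = \max_A f(A)$. By \cite{FeigeMV07}, a uniformly random $R$ satisfies $\E[f(R)] \geq \frac14 OPT$. Since $0 \le f(R) \le OPT$, a reverse-Markov argument gives
$$\Pr\left[f(R) \geq (\tfrac14-\epsilon) OPT\right] \geq \epsilon.$$
To see this, write $p$ for the probability on the left. Then $\frac14 OPT \le \E[f(R)] \le p\cdot OPT + (1-p)(\frac14-\epsilon)OPT$, which rearranges to $p \ge \epsilon/(\frac34+\epsilon) \ge \epsilon$. So each $R_i$ succeeds independently with probability at least $\epsilon$. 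The probability that all $m$ sets fail for this $f$ is therefore at most $(1-\epsilon)^m \le e^{-\epsilon m}$.

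\textbf{Step 2 (union bound).} Summing over the at most $2^{p(n)}$ functions in $\cC_n$, the probability that some $f$ is not approximated is at most $2^{p(n)} e^{-\epsilon m}$. Choosing $m = \lceil (p(n)+1)/\epsilon \rceil = \poly(n)$ makes this less than $1$. Hence some realization $\cF_n = \{R_1,\dots,R_m\}$ satisfies $\max_{S\in\cF_n} f(S) \ge (\frac14-\epsilon)\max_A f(A)$ for every $f \in \cC_n$.

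\textbf{Main obstacle.} The difficulty lies in the formal notion of polynomial representability. If the representation involves real parameters, the class may be infinite, and the union bound over $\cC_n$ is not directly available. In that case we would replace counting functions by counting the distinct "success patterns". For each $f$, the event $f(S) \ge (\frac14 - \epsilon)OPT_f$ depends only on which sets $S$ are good for $f$. If a representation of size $\poly(n)$ makes these patterns polynomially describable (for example, as sign patterns of polynomially many low-degree polynomials in the parameters), then Warren/Milnor-type bounds limit the number of patterns to $2^{\poly(n)}$. The union bound then applies to patterns rather than to functions.

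For the finite-field case relevant to this paper, such as matroid connectivity functions over $\F_q$, the direct count $q^{\poly(n)}$ already suffices. We would state the theorem under the counting definition and remark on this extension.
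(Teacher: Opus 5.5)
Your proposal is correct and follows the paper's proof essentially verbatim: you sample $\poly(n)$ uniform random sets and use $\E[f(R)]\ge\frac14\max_A f(A)$ with a reverse-Markov step to get per-set success probability at least $\epsilon$. You then take a union bound over the at most $2^{\poly(n)}$ functions, a count that follows from the paper's injective-encoding definition of polynomial representability. One small point: your step $\epsilon/(\frac34+\epsilon)\ge\epsilon$ needs $\epsilon\le\frac14$, which the paper assumes explicitly; for $\epsilon>\frac14$ the statement is trivial since $f\ge 0$.
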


\paragraph{Universal families for absolute XOS functions.}

Here, we consider the following classes. These are possibly non-monotone variants of the more commonly studied monotone XOS functions, which play a significant role in algorithmic game theory.

\begin{definition}
\label{def:XOS}
$f:2^V \to \R$ is a absolute XOS function, if there is a finite index set $I$ and
weight functions $w_i: V \rightarrow \R$ and $c_i \in \R$ such that 
$f(S) = \max_{i \in I} \left| \sum_{j \in S} w_{i}(j) + c_i\right|$.
\end{definition}

\begin{definition}
\label{def:XOS-connectivity}
$f:2^V \to \R$ is an XOS connectivity function, if there is a finite index set $I$ and
weight functions $w_i: V \rightarrow \R$ such that $w_i(V) = 0$ and 
$f(S) = \max_{i \in I} \left| \sum_{j \in S} w_{i}(j)\right|$.
\end{definition}

We remark that XOS connectivity functions are symmetric in the sense that $f(S) = f(V \setminus S)$, and they subsume submodular connectivity functions.
Absolute XOS functions are more general, but in fact do not subsume all nonnegative submodular functions. Nevertheless, a universal family for absolute XOS functions also implies a universal family for nonnegative submodular functions, via submodular connectivity functions (while losing a factor of $2$). See Appendix~\ref{sec:XOS-submod} for more details.

We prove the following.

\begin{theorem}
\label{thm:XOS-upper}
For every $n \geq 1$, there is a family $\cF_n$ of size $|\cF_n| = \poly(n)$ such that
for every absolute XOS function $f:2^{[n]} \to \R_{\geq 0}$,
$$ \max_{S \in \cF_n} f(S) \geq \Omega\left(\sqrt{\frac{\log n}{n}}\right) \max_{S \subset [n]} f(S).$$
\end{theorem}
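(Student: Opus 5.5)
We reduce to a single linear form. Let $S^*$ attain $\max_S f(S)$ and let $i^*$ be the index attaining the maximum in the definition of $f(S^*)$. Write $w = w_{i^*}$ and $c = c_{i^*}$, and set $g(S) = |w(S) + c|$, where $w(S) = \sum_{j\in S} w(j)$. Then $f \geq g$ pointwise and $g(S^*) = f(S^*) = \mathrm{OPT}$. It therefore suffices to construct a family $\cF_n$ such that every affine absolute function $g(S) = |w(S)+c|$ satisfies $\max_{S\in\cF_n} g(S) \geq \Omega(\sqrt{\log n/n}) \max_S g(S)$.

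Now $\max_S g(S) = \max(w(P)+c,\; -w(N)-c)$, where $P = \{j : w(j)>0\}$ and $N = \{j : w(j) < 0\}$. Always include $\emptyset$ in $\cF_n$; this gives $|c|$. Hence we may assume $\mathrm{OPT} \geq 2|c|$, say up to sign $w(P) \geq \mathrm{OPT}/2 \geq |c|$. It then suffices to find $S \in \cF_n$ with $|w(S) + c| \geq \Omega(\sqrt{\log n/n})\, w(P)$.

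For the family I would take, for a parameter $k = \Theta(\log n)$, a polynomial-size family of sets which, for every set $T$ of size at most $k$, contains $T$ and avoids a prescribed subset. Concretely, I would use a $k$-wise independent (or $k$-restriction / $(n,k)$-universal) family of size $n^{O(1)}\cdot 2^{O(k)} = \poly(n)$, which realizes every pattern on every $k$ coordinates. Together with its complements, this contains all sets agreeing on any $k$ coordinates with any pattern, and it also supports a $k$-wise independent uniform distribution.

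There are two cases. If the top $k$ positive weights carry a large share, namely their sum is at least $\sqrt{k/n}\cdot w(P)$, pick $T$ to be these top elements. Any $S$ in the family with $S \supseteq T$ is otherwise arbitrary, so we control the sign of the rest by the random argument below. Otherwise, every positive weight is at most $w(P)/\sqrt{kn}$, since the top $k$ average at most $w(P)/\sqrt{kn}$. In that case $w(R)$ for a $k$-wise independent uniform $R$ has mean $(w(P)+w(N))/2$ and, by anti-concentration, fluctuations of order $\|w\|_2$. Since $\|w_P\|_2 \geq w(P)/\sqrt{n}$, a Paley–Zygmund argument using the fourth moment (available under $4$-wise independence) shows that $w(R)+c$ deviates by $\Omega(\|w\|_2)$ in both directions with constant probability. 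At least one direction yields $|w(R)+c| \geq \Omega(w(P)/\sqrt n)$. To gain the extra $\sqrt{\log n}$, I would use $k$-wise independence with $k = \Theta(\log n)$ to get Gaussian-like lower tails: with probability $n^{-O(1)}$ the sum exceeds its mean by $\Omega(\sqrt{\log n})\,\|w\|_2$. Because the family has polynomial size, it suffices to show that such an event has positive probability.

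The main obstacle is this lower-tail (anti-concentration at deviation $\sqrt{\log n}$) bound under only limited independence. It requires the weights to be spread out, which is exactly why the large-weight case is handled separately by hitting the top $k$ elements. A cleaner alternative that I would try first is a greedy grouping of $[n]$ into blocks, taking every union of $O(\log n)$ blocks from each of polynomially many random partitions into $\Theta(n/\log n)$... If that fails, I would use truly random sets with a union bound over a net of weight vectors. That does not directly work, since the weights are real and not a finite class, so the explicit $k$-wise construction above is the preferred route.
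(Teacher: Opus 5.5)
Your reduction to a single term $g(S)=|w(S)+c|$ and your use of $\emptyset$ to dispose of large $|c|$ are fine; the paper makes the same reduction, normalizing by $p(w,c)=\|w\|_1+|w(V)+2c|=2\max_A g(A)$. The gap is in the step that is supposed to produce the extra $\sqrt{\log n}$: the tool you propose for it does not exist at polynomial size. An unbiased $k$-wise independent distribution on $\{-1,1\}^n$ has support of size at least $\sum_{i\le k/2}\binom ni$. The reason is that the characters $\chi_T$ with $|T|\le k/2$ are orthonormal under it, since $\IE[\chi_T\chi_{T'}]=\IE[\chi_{T\triangle T'}]=0$ for $T\neq T'$. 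For $k=\Theta(\log n)$ this bound is $n^{\Omega(\log n)}$.

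The polynomial-size substitutes you mention do not rescue the argument. An $(n,k)$-universal family only guarantees all patterns on every $k$ coordinates. It gives no control of $\langle w,x\rangle$ for spread-out $w$, which is exactly the case that needs the gain. A $\delta$-biased space of polynomial size has $\delta=n^{-O(1)}$, and the naive bound on the error it introduces in $\IE[\langle w,x\rangle^{2q}]$ is $\delta\|w\|_1^{2q}\le\delta n^{q}\|w\|_2^{2q}$. This swamps the main term $(2q-1)!!\,\|w\|_2^{2q}$ unless $\delta\le n^{-\Omega(\log n)}$, i.e. unless the space is superpolynomial. So even if you completed the Paley--Zygmund lower-tail bound, which you leave open, you would get a quasi-polynomial family, not a polynomial one.

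The case split also has local errors:
\begin{itemize}
\item In case 2, only the weights outside the top $k$ are at most $w(P)/\sqrt{kn}$; the largest weight can be close to $\sqrt{k/n}\,w(P)$.
\item Negative weights are not controlled. A deviation of $\Omega(\sqrt{\log n})\|w\|_2$ is impossible when a single negative weight of size $\Theta(w(P))$ dominates $\|w\|_2$.
\item In case 1, ``controlling the sign of the rest'' needs an actual argument, for example comparing the conditional means of $w(S)+c$ given $T\subseteq S$ versus $T\cap S=\emptyset$. That comparison again uses $(k+1)$-wise independence.
\end{itemize}

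The route you dismissed is the one that works, and it is the paper's proof. Take $N=n^D$ independent uniformly random sets, with $D>1$ fixed. Fix $(w,c)$ with $p(w,c)=1$. The Montgomery-Smith anticoncentration bound in the $\ell_1$ form $\IP(x^Tw>C^{-1}t\|w\|_1)\ge C^{-1}e^{-Ct^2n}$ needs no spreadness assumption on $w$. It gives $\IP(|w(S)+c|\ge 2\epsilon)\ge C^{-1}n^{-2(D-1)/3}$ with $\epsilon=\Theta(\sqrt{(D-1)\log n/n})$.

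Real-valued weights are no obstacle. For every $S$, the map $(w,c)\mapsto w(S)+c$ is $1$-Lipschitz in $\ell_1$. So it suffices to have margin $2\epsilon$ on an $\ell_1$ $\epsilon$-net of the normalized parameter set, which has size $(6/\epsilon)^{n+1}=e^{O(n\log n)}$. The union bound then succeeds because $N\cdot n^{-2(D-1)/3}=n^{(D+2)/3}\gg n\log n$.

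If you want an explicit family instead, the property you need is that $\cF$ hits every halfspace of uniform-measure density $n^{-O(1)}$. Bounded independence is not the right tool for that.
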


This result is optimal, due to the following.

\begin{theorem}
\label{thm:XOS-lower}
For every sufficiently large $n \geq 1$, and every family $\cF \subset 2^{[n]}$, $|\cF| = \poly(n)$, 
there is a nonzero absolute XOS function $f:2^{[n]} \to \R_{\geq 0}$ such that
$$ \max_{S \in \cF} f(S) \leq O\left(\sqrt{\frac{\log n}{n}} \right) \max_{S \subset [n]} f(S).$$
\end{theorem}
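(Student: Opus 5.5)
Given $\cF$ with $|\cF| = m = \poly(n)$, I will construct a single linear form $f(S) = |\sum_{j\in S} w_j + c|$ (one index, $|I|=1$). With a single index, $f$ is a nonnegative absolute XOS function by \Cref{def:XOS}. The plan is to pick the weights at random, so that every set in $\cF$ has a small value while some set has a large value. Take $w_j = \sigma_j \in \{\pm 1\}$ uniformly and independently, and set $c = -\frac12 \sum_j \sigma_j$. Then for every $S$,
$$\sum_{j\in S} w_j + c = \tfrac12\Bigl(\sum_{j \in S}\sigma_j - \sum_{j\notin S}\sigma_j\Bigr) = \tfrac12 \sum_j \chi_S(j)\sigma_j,$$
where $\chi_S(j) = +1$ if $j\in S$ and $-1$ otherwise. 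Hence $f(S) = \frac12|\langle \chi_S,\sigma\rangle|$, which is a sum of $n$ independent Rademacher signs, each with coefficient $\pm 1$.

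The key steps, in order, are as follows.
\begin{enumerate}
\item \emph{Large maximum.} Choosing $S = \{j : \sigma_j = +1\}$ gives $\chi_S = \sigma$, so $\max_{S\subset[n]} f(S) = n/2$ deterministically.
\item \emph{Small values on $\cF$.} For each fixed $S\in\cF$, Hoeffding's inequality gives $\Pr[|\langle\chi_S,\sigma\rangle| > t] \le 2e^{-t^2/(2n)}$. Take $t = \sqrt{2n\ln(4m)}$. A union bound over the $m$ sets of $\cF$ shows that with probability at least $1/2$, every $S\in\cF$ satisfies $f(S) \le \frac12\sqrt{2n\ln(4m)} = O(\sqrt{n\log n})$, using $m = \poly(n)$.
\item \emph{Conclusion.} Fix such a $\sigma$. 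Then $\max_{S \in \cF} f(S)/\max_S f(S) \le O(\sqrt{n\log n})/(n/2) = O(\sqrt{\log n / n})$. The function $f$ is nonzero and nonnegative, which completes the argument.
\end{enumerate}

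The only delicate point is designing the instance so that the optimum is deterministically large. A plain random linear function $|\sum_{j\in S}\sigma_j|$ also has maximum about $n/2$, but the offset $c$ is what makes every set's value a centered sum of $n$ signs. Without it, large sets in $\cF$ (for example $[n]$ itself) could also have large value. With $c = -\frac12\sum_j\sigma_j$, the value $f(S)$ depends on $S$ only through the sign pattern $\chi_S$. Concentration then applies uniformly to all sets, with no dependence on their sizes. The remaining parts are a routine Hoeffding estimate and a union bound.
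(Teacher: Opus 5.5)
Your proof is correct. It follows the same template as the paper's proof of \Cref{xos-quantitative-lower-bound}: take a $\pm 1$ coloring with small imbalance on every set of $\cF$, use the single linear form it defines as $f$, and note that the optimum is about $n/2$. The difference is how you obtain the coloring. The paper invokes Spencer's theorem (\Cref{sperners-discrepancy-theorem}), which gives imbalance $O(\sqrt{n\log(2|\cF|/n)})$. You use a uniformly random coloring with Hoeffding and a union bound, which gives $O(\sqrt{n\log|\cF|})$.

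For $|\cF|=\poly(n)$ the two bounds agree up to a constant factor. So your argument is self-contained, elementary, and fully suffices for the theorem as stated. What Spencer's theorem buys the paper is the regime $|\cF|=\Theta(n)$. There it gives imbalance $O(\sqrt{n})$, which shows that the $\frac{1}{3\sqrt{n}}$ factor of the Hadamard family (\Cref{symmetric-xos-approximation-by-hadamard-basis}) is tight. More generally it pins down the rate $\Theta(\sqrt{(1+(D-1)\log n)/n})$ for $|\cF|=\Theta(n^D)$. Your union-bound estimate loses a $\sqrt{\log n}$ factor in that regime.

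Your closing remark about the offset is mistaken, though harmlessly so. The offset $c=-\frac12\sum_j\sigma_j$ is not needed. Without it, $\sum_{j\in S}\sigma_j$ is a sum of $|S|\le n$ independent signs, so the same Hoeffding bound applies to every $S\in\cF$, including $[n]$ itself. Taking the larger color class still gives $\max_S f(S)\ge n/2$. This is exactly the paper's choice $f(A)=|\sum_{i\in A}\chi(i)|$. Keeping the offset only gives the minor bonus that your $f$ is symmetric, with maximum exactly $n/2$.
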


More precisely, our upper and lower bounds determine the approximation factor $c_D(n)$ when $|\cF_n| = \Theta(n^D)$ up to an absolute multiplicative constant:
First, if $\cF_n$ has size at most $n$, then we can construct a nonzero affine linear function that vanishes on $\cF_n$, and so $c_D(n)$ is $0$ (eventually) when $D < 1$.
For $D \geq 1$, our bounds give $c_D(n) = \Theta\left(\sqrt{\frac{1 + (D-1) \log n}{n}}\right)$.

We remark that these results are related to combinatorial discrepancy: a universal family $\cF$
approximates the maximum of any XOS function, if there is no way to design signed weights
such that their balance on each set $\cF$ is small, but the total weight is large.
Hence a good universal family $\cF$ for XOS functions is a difficult instance for discrepancy, even with weighted assignments. 

\subsection{Other related work}

A related question which was studied is the question of {\em pointwise approximation} for submodular (and other) functions: approximation of $f$ by another function $g$ such that $f(S) \leq g(S) \leq \alpha f(S)$ for all $S \subset V$. If this can be done using a function $g$ in a suitable form,  this can be an avenue to approaching our problem. 

In particular, \cite{GoemansHIM09} proved that every monotone submodular function can be approximated pointwise within a factor of $\sqrt{n}$ by a function in the form $g(S) = \sqrt{w^T 1_S}$. \cite{BalcanHI12} extended this result to (non-monotone) symmetric submodular functions: They can be approximated within a factor of $\sqrt{n}$ by functions in the form $g(S) = \sqrt{1_S^T M 1_S}$ where $M$ is a positive definite matrix. Since the expectation of functions in this form can be captured well by a pairwise independent distribution, this approximation could be used to derive a universal family that approximates the maximum within a factor of $\sqrt{n}$ for symmetric submodular functions.

Note that this result is rather weak in the context of our results: For submodular functions, we obtain an $O(\frac{\log n}{\log \log n})$-approximating universal family. The reason is that pointwise approximation is much stronger than what we require of our universal family: to capture the maximum of $f(S)$ rather than its values on all sets. Nevertheless, this connection was an inspiration for our result for absolute XOS functions. The basis for the pointwise approximation results of \cite{GoemansHIM09,BalcanHI12} is the approximation of a centrally symmetric polytope by an ellipsoid. 
While the submodular structure of a polytope is important in the works of \cite{GoemansHIM09,BalcanHI12}, it is less crucial for our purposes. 
The class of functions related to general centrally symmetric polytopes is the class of absolute XOS functions such that $f(\emptyset) = 0$, i.e., in the form $f(S) = \max_{x \in P} |x(S)|$.
For this class, one can obtain a universal family approximating the maximum within a factor of $\sqrt{n}$ by several different arguments, one of them being the approximation of $P$ by an ellipsoid. We prove a slightly improved result here (Theorem~\ref{thm:XOS-upper}) by a different, more direct method.

\section{Universal sets for submodular functions}

For simplicity of exposition, we first reduce the problem to normalized \emph{symmetric} submodular functions, which we call ``submodular connectivity functions''.

\begin{definition}
A submodular connectivity function is a nonnegative symmetric submodular function $f$ such that $f(\emptyset)  = 0$.
\end{definition}

\subsection{Reduction to submodular connectivity functions}
\label{sec:reduction-to-submodular-connectivity-functions}

\begin{definition}
    Let $f : 2^V \to \IR$ be a submodular function and let $S \subset V$.
    We define $f^S_{\mathrm{sym}} : 2^S \to \IR$ by $f^S_{\mathrm{sym}}(A) = f(A) + f(S \setminus A) - f(S) - f(\emptyset)$.
\end{definition}

It is easy to check that $f^S_{\mathrm{sym}}$ is a submodular connectivity function on $S$:
$f^S_{\mathrm{sym}}$ is submodular since the maps $A \mapsto f(A)$ and $A \mapsto f(S \setminus A)$ are.
$f^S_{\mathrm{sym}}(A) = f^S_{\mathrm{sym}}(S \setminus A)$, $f^S_{\mathrm{sym}}(A) \geq 0$ by submodularity of $f$, and $f^S_{\mathrm{sym}}(\emptyset) = 0$.

In particular, for a general submodular function $f:2^V \to \R$, $f^V_{\mathrm{sym}}$ is a submodular connectivity function.
Moreover, assume $\cF$ is a universal approximating family for submodular connectivity functions, i.e.,
\begin{equation*}
    \max_{A \in \cF} f^V_{\mathrm{sym}}(A) \geq c(n) \max_{A \subset V} f^V_{\mathrm{sym}}(A)
\end{equation*}
for any choice of $f$.
Then we can augment $\cF$ by complements, to obtain $\cF' = \{ S, V \setminus S: S \in \cF \}$, and we get
\begin{align*}
    \max_{A \in \cF'} f(A)
    &\geq \frac12 \max_{A \in \cF} (f(A) + f(V \setminus A))\\
    &= \frac12 \max_{A \in \cF} (f^V_{\mathrm{sym}}(A) + f(\emptyset) + f(V))\\
    &\geq \frac12 c(n) \max_{A \subset V} (f^V_{\mathrm{sym}}(A) + f(\emptyset) + f(V))\\
    &\geq \frac12 c(n) \max_{A \subset V} f(A),
\end{align*}
where we used the definition of $c(n)$ and that $\frac{1}{2} c(n) \leq \frac{1}{2} \leq 1$ in the third step, and nonnegativity of $f$ in the fourth step.

Hence it is sufficient to find a universal family for submodular connectivity functions,
and this implies the same approximation within a factor of $2$ for general (nonnegative) submodular functions.

Let us prove now a basic property of submodular connectivity functions which will be important in the following.

\begin{lemma}\label{restriction-inequalities}
    For any submodular connectivity function $f : 2^V \to \IR$ and any $A, S \subset V$, 
    denote $S^c = V \setminus S$. Then we have
    \begin{equation*}
        f(A) - f(S) \leq \frac{f^S_{\mathrm{sym}}(A \cap S) + f^{S^c}_{\mathrm{sym}}(A \cap S^c)}{2} \leq f(A).
    \end{equation*}
\end{lemma}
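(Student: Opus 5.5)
Write $A_1 = A\cap S$ and $A_2 = A\cap S^c$. I will first expand the middle quantity using the definition of $f^S_{\mathrm{sym}}$ together with $f(\emptyset)=0$ and symmetry ($f(S^c)=f(S)$):
\begin{align*}
f^S_{\mathrm{sym}}(A_1) + f^{S^c}_{\mathrm{sym}}(A_2) &= f(A_1) + f(S\setminus A_1) - f(S) + f(A_2) + f(S^c\setminus A_2) - f(S^c)\\
&= f(A_1) + f(S\setminus A_1) + f(A_2) + f(S^c\setminus A_2) - 2f(S).
\end{align*}
Call this $M$. Both inequalities then reduce to comparing $M$ with $2f(A)$ and $2f(A)-2f(S)$. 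Throughout I will use symmetry to replace $f(S^c\setminus A_2)$ by $f(S\cup A_1)$, since $V\setminus(S^c\setminus A_2) = S\cup A_2$. I must get the complement right: the complement of $S^c\setminus A_2$ is $S\cup A_2$, and I will check this identity carefully at each use.

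\emph{Lower bound $2f(A)-2f(S)\le M$.} This is equivalent to $2f(A) \le f(A_1)+f(A_2)+f(S\setminus A_1)+f(S^c\setminus A_2)$. Since $A_1,A_2$ are disjoint, submodularity with $f(\emptyset)\ge 0$ gives $f(A)\le f(A_1)+f(A_2)$. For the second copy of $f(A)$, note $A = V\setminus\big((S\setminus A_1)\cup(S^c\setminus A_2)\big)$, and these two sets are disjoint. So by symmetry and the same subadditivity,
\[
f(A) = f\big((S\setminus A_1)\cup(S^c\setminus A_2)\big) \le f(S\setminus A_1)+f(S^c\setminus A_2).
\]
Adding the two inequalities gives the claim.

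\emph{Upper bound $M\le 2f(A)$.} This is equivalent to $f(A_1)+f(S\setminus A_1)+f(A_2)+f(S^c\setminus A_2)\le 2f(A)+2f(S)$. I will apply submodularity twice to pair the terms appropriately.

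First, $A_1 = A\cap S$, so $f(A_1)+f(A\cup S)\le f(A)+f(S)$. By symmetry $f(A\cup S)=f(S^c\setminus A_2)$, hence
\[
f(A_1)+f(S^c\setminus A_2)\le f(A)+f(S).
\]
Second, $A_2 = A\cap S^c$, so $f(A_2)+f(A\cup S^c)\le f(A)+f(S^c)$. Since $V\setminus(A\cup S^c) = S\setminus A_1$, this gives
\[
f(A_2)+f(S\setminus A_1)\le f(A)+f(S).
\]
Summing the two yields $M+2f(S)\le 2f(A)+2f(S)$, which is $M\le 2f(A)$.

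No step is a real obstacle. The only care needed is in choosing the pairings for the upper bound: each "inside" piece must be paired with the complement of the "outside" piece on the other side, so that the intersection/union form of submodularity applies. I will verify all complement identities explicitly.
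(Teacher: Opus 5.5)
Your proof is correct and follows the paper's argument: expand the middle term, use subadditivity on the disjoint pieces of $A$ and of $V\setminus A$ for the lower bound, and submodularity on the pairs $(A,S)$ and $(A,S^c)$ combined with symmetry for the upper bound. One slip: in your preliminary remark, $f(S^c\setminus A_2)$ equals $f(S\cup A_2)=f(S\cup A)$, not $f(S\cup A_1)=f(S)$, though you use the correct identity where it actually matters.
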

\begin{proof}
Unravelling the definitions, we can write
\begin{equation*}
    f^S_{\mathrm{sym}}(A \cap S) + f^{S^c}_{\mathrm{sym}}(A \cap S^c) = f(A \cap S) + f(A \cap S^c) + f(A^c \cap S) + f(A^c \cap S^c) - f(S) - f(S^c).
\end{equation*}
Using the symmetry and submodularity of $f$,
    \begin{align*}
        2 f(A) - 2 f(S)
        &= f(A) + f(A^c) - f(S) - f(S^c)\\
        &\leq f(A \cap S) + f(A \cap S^c) + f(A^c \cap S) + f(A^c \cap S^c) - f(S) - f(S^c) \\
        &\leq f(A) + f(A^c)\\
        &= 2 f(A).
    \end{align*}
    Dividing the chain of inequalities by $2$ yields the desired claim.
\end{proof}

\subsection{A logarithmic universal family}
\label{sec:submod-log}

Let us now design a universal family for submodular connectivity functions. Our first construction will be very simple: Assuming that $n = 2^d$ (or extending the ground set by dummy elements if necessary), we can view the ground set as $V = \{0,1\}^d$. Then we define $\cF = \{ H_1,\ldots,H_d \}$ where $H_i = \{x \in \{0,1\}^d: x_i = 0 \}$. The idea here is that this is a minimal family such that each pair $i \neq j \in V$ is separated by at least one set in $\cF$. This is a necessary requirement to achieve any approximation of the maximum even for graph cut functions, because otherwise an edge between $\{i,j\}$ not separated by $\cF$ would cause the maximum to be positive while the value of each $S \in \cF$ is zero.

Hence, it is perhaps optimistic to hope that such a family approximates the maximum of any submodular function. But this turns out to be true, within a logarithmic approximation factor, for any such family.

\begin{definition}
    For $S_1,\ldots,S_d \subset V$, define $\sigma(S_1, \ldots, S_d)$ to be the family of all sets that can be generated from $S_1,\ldots,S_d$ by the operations of union, intersection, and complement within $V$.
\end{definition}

\begin{lemma}\label{covering-bound}
    Let $V$ be a finite set, let $f : 2^V \to \IR$ be a submodular connectivity function, and let $S_1, \dots, S_d$ be subsets of $V$.
    Then
    \begin{equation*}
        \max_{A \in \sigma(S_1, \dots, S_d)} f(A) \leq \sum_{i=1}^d f(S_i).
    \end{equation*}
\end{lemma}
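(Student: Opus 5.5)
Every set in $\sigma(S_1,\dots,S_d)$ is a union of atoms of the partition generated by $S_1,\dots,S_d$. So I would prove the statement by induction on $d$, using Lemma~\ref{restriction-inequalities} to peel off one generating set at a time.

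\emph{Base case.} For $d=0$, $\sigma(\emptyset)=\{\emptyset,V\}$. Both sets have value $0$, since $f(\emptyset)=0$ and $f(V)=f(\emptyset)=0$ by symmetry, so the bound holds.

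\emph{Inductive step.} Let $A\in\sigma(S_1,\dots,S_d)$ and put $S=S_d$, $S^c=V\setminus S$. By the left inequality of Lemma~\ref{restriction-inequalities},
\begin{equation*}
f(A)\le f(S_d)+\tfrac12\bigl(f^{S}_{\mathrm{sym}}(A\cap S)+f^{S^c}_{\mathrm{sym}}(A\cap S^c)\bigr).
\end{equation*}
The set $A\cap S$ lies in $\sigma(S_1\cap S,\dots,S_{d-1}\cap S)$, the family generated inside the ground set $S$. Likewise $A\cap S^c$ lies in the corresponding family on $S^c$. Both $f^S_{\mathrm{sym}}$ and $f^{S^c}_{\mathrm{sym}}$ are submodular connectivity functions, as noted after their definition. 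The induction hypothesis on these smaller instances therefore gives
\begin{equation*}
f^S_{\mathrm{sym}}(A\cap S)\le\sum_{i<d} f^S_{\mathrm{sym}}(S_i\cap S),
\end{equation*}
and the analogous bound on $S^c$.

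\emph{Reassembling.} Summing the two bounds and applying, for each $i<d$, the right inequality of Lemma~\ref{restriction-inequalities} with $A=S_i$ and $S=S_d$, we get
\begin{equation*}
\tfrac12\bigl(f^{S}_{\mathrm{sym}}(S_i\cap S)+f^{S^c}_{\mathrm{sym}}(S_i\cap S^c)\bigr)\le f(S_i).
\end{equation*}
Combining this with the display above yields $f(A)\le f(S_d)+\sum_{i<d}f(S_i)$, as required.

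\emph{Main obstacle.} The delicate point is setting up the induction so that it applies to these restricted instances. The statement must be quantified over all finite ground sets and all connectivity functions, so that the hypothesis can be invoked on $S$ and on $S^c$ with the restricted functions. One also has to check that intersecting $\sigma(S_1,\dots,S_{d-1},S_d)$ with $S$ gives exactly the sets generated by the traces $S_i\cap S$ with complements taken in $S$. This holds because union and intersection commute with $\cap S$, and complement in $V$ becomes complement in $S$. Degenerate cases such as $S=\emptyset$ are harmless, since the restricted function is identically zero there.
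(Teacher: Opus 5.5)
Your proposal is correct and follows essentially the same route as the paper. You use induction on $d$ with the trivial base case, peel off $S_d$ via the left inequality of Lemma~\ref{restriction-inequalities}, apply the hypothesis to $f^{S_d}_{\mathrm{sym}}$ and $f^{V\setminus S_d}_{\mathrm{sym}}$ on the traces, and reassemble with the right inequality applied to $A=S_i$.
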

\begin{proof}
    We prove the claim by induction on $d$.
    So, let $A \in \sigma(S_1, \dots, S_d)$.
    If $d = 0$, then $A \in \{\emptyset, V\}$, and then $f(A) = f(\emptyset) = 0 \leq 0$, as required.
    Now suppose $d \geq 1$ and that the claim holds for $d$ replaced with $d - 1$.
    Since $f^{S^d}_{\mathrm{sym}}$ is a submodular connectivity function, and $A \cap S_d \in \sigma_{S_d}(S_1 \cap S_d, \dots, S_{d-1} \cap S_d)$, the inductive assumption yields
    \begin{equation*}
        f^{S^d}_{\mathrm{sym}}(A \cap S_d) \leq \sum_{i=1}^{d-1} f^{S^d}_{\mathrm{sym}}(S_i \cap S_d).
    \end{equation*}
    Analogously,
    \begin{equation*}
        f^{\overline S_d}_{\mathrm{sym}}(A \cap \overline S_d) \leq \sum_{i=1}^{d-1} f^{\overline S_d}_{\mathrm{sym}}(S_i \cap \overline S_d).
    \end{equation*}
    Combining this with \Cref{restriction-inequalities}, we get
    \begin{align*}
        f(A)
        &\leq \frac{1}{2} f^{S^d}_{\mathrm{sym}}(A \cap S_d) + \frac{1}{2} f^{\overline S_d}_{\mathrm{sym}}(A \cap \overline S_d) + f(S_d)\\
        &\leq \sum_{i=1}^{d-1} \left(\frac{1}{2} f^{S^d}_{\mathrm{sym}}(S_i \cap S_d) + \frac{1}{2} f^{\overline S_d}_{\mathrm{sym}}(S_i \cap \overline S_d)\right) + f(S_d)\\
        &\leq \sum_{i=1}^{d-1} f(S_i) + f(S_d) = \sum_{i=1}^d f(S_i).\qedhere
    \end{align*}
\end{proof}

\begin{corollary}\label{submod-connectivity-log}
    Let $n \geq 2$ and let $V$ be a set of size $n$.
    Then there is a family $\cF$ of size $|\cF| \leq \lceil \log_2 n \rceil$ such that
    for every submodular connectivity function $f:2^V \to \R_{\geq 0}$,
    $$ \max_{S \in \cF} f(S) \geq \frac{1}{\lceil \log_2 n \rceil} \max_{S \subset V} f(S).$$
\end{corollary}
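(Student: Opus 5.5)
Let $d = \lceil \log_2 n \rceil$. Since $2^d \geq n$, we can fix an injection $\phi : V \to \{0,1\}^d$, so distinct elements get distinct binary labels. Take $\cF = \{S_1, \dots, S_d\}$ with $S_i = \{v \in V : \phi(v)_i = 0\}$. This is the restriction to $V$ of the coordinate halfspaces $H_i$ from the discussion above. Clearly $|\cF| \leq d$, and we may discard duplicate or empty sets without harm.

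The key observation is that $\sigma(S_1, \dots, S_d) = 2^V$. For each $v \in V$, the singleton $\{v\}$ is the atom
\begin{equation*}
\bigcap_{i:\, \phi(v)_i = 0} S_i \;\cap\; \bigcap_{i:\, \phi(v)_i = 1} (V \setminus S_i),
\end{equation*}
because $\phi$ is injective, so no other element of $V$ lies in this intersection. Every subset of $V$ is then a finite union of singletons, hence lies in $\sigma(S_1,\dots,S_d)$. (The empty set and $V$ are also included, as complements of each other.)

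Now let $f$ be any submodular connectivity function on $V$ and let $A^* \subset V$ maximize $f$. By the observation, $A^* \in \sigma(S_1,\dots,S_d)$, so \Cref{covering-bound} gives
\begin{equation*}
f(A^*) \leq \sum_{i=1}^d f(S_i) \leq d \max_{S \in \cF} f(S).
\end{equation*}
Dividing by $d = \lceil \log_2 n \rceil$ gives the claimed bound. If $n$ is not a power of $2$, nothing changes, since the injection into $\{0,1\}^d$ simply leaves some labels unused and no dummy elements are needed.

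All the real work is done by \Cref{covering-bound}. The only step that needs care is showing that singletons are generated by the $S_i$, and that follows directly from injectivity of the labeling.
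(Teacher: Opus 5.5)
Your proposal is correct and follows the paper's proof: embed $V$ into $\{0,1\}^d$, take the coordinate halfspaces restricted to $V$, observe that $\sigma(S_1,\dots,S_d)=2^V$, and apply \Cref{covering-bound}. Your atom argument via injectivity of the labeling only spells out the step $\sigma(S_1,\dots,S_d)=2^V$, which the paper states without justification.
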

\begin{proof}
    We write $d \coloneqq \lceil \log_2 n\rceil > 0$, and identify $V$ with a subset of $\{0, 1\}^d$.
    We let $S_i = V \cap \{x \in \{0, 1\}^d : x_i = 0\}$, so $\sigma(S_1, \dots, S_d) = 2^V$.
    Therefore, $\cF = \{S_1, \dots, S_d\}$ satisfies the required properties by \Cref{covering-bound}.
\end{proof}

The reduction from nonnegative submodular functions to submodular connectivity functions described in \Cref{sec:reduction-to-submodular-connectivity-functions} gives the statement of \Cref{thm:submod-log}.

We remark that the logarithmic bound in \Cref{submod-connectivity-log} is tight for the universal family used in the proof.
Consider the hypercube $V = \{0,1\}^d$ and the edge set $E = \{ (x,y): x,y \in V, \|x-y\|_1 = 1\}$.
That is, $G = (V,E)$ is the usual hypercube graph of dimension $d$. The graph cut function associated with $G$ has a maximum of $d 2^{d-1}$, by considering the bipartition of $V$ into vertices of even/odd $\ell_1$ norm. However, our family of axis-aligned hyperplanes only achieves cuts of size $2^{d-1}$, by cutting all the edges in the respective dimension.

\subsection{A subpolynomial universal family}

Here we present an improvement of the construction from the previous section, by considering a larger universal family. In Section~\ref{sec:submod-log}, we considered a family of sets (or cuts) such that each pair is separated by at least one cut. Instead, let's consider partitions (or the associated $\sigma$-algebras) $\cP_1,\ldots,\cP_d$ (into possibly more than two parts) such that each pair is separated by at least one partition. It turns out that Lemma~\ref{covering-bound} generalizes to partitions in the following way.

\begin{definition}
For $\sigma$-algebras $\cP_1,\ldots,\cP_d$, we denote by $\sigma(\cP_1,\ldots,\cP_d)$ the $\sigma$-algebra generated by $\cP_1,\ldots,\cP_d$.
By $\IE_{S \in \cP_i}$, we denote the expectation over a uniformly random set from $\cP_i$.
\end{definition}

We remark that a $\sigma$-algebra on a finite set $V$ is defined by a partition $V = T_1 \sqcup T_2 \sqcup \ldots \sqcup T_\ell$, and a uniformly random element of $\cP_1$ is obtained by selecting each $T_i$ independently with probability $1/2$ and taking the union.

\begin{lemma}\label{partition-refinment-bound}
    Let $V$ be a finite set, let $f : 2^V \to \IR$ be a submodular connectivity function, and let $\cP_1, \dots, \cP_d$ be $\sigma$-algebras of $V$.
    Then
    \begin{equation*}
        \max_{A \in \sigma(\cP_1, \dots, \cP_d)} f(A) \leq 2\sum_{i=1}^d \IE_{S \in \cP_i}[f(S)].
    \end{equation*}
\end{lemma}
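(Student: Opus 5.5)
The plan is to combine the restriction inequality of \Cref{restriction-inequalities} with a randomized version of the induction used for \Cref{covering-bound}. I replace the single cut $S_d$ by a uniformly random set $S \in \cP_d$. First, though, I would prove an auxiliary inequality, and with it the base case $d=1$ (and the case $d=0$, where $A\in\{\emptyset,V\}$ and $f(A)=0$).

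\emph{Auxiliary inequality.} For a submodular connectivity function $f$ and any $X,Y \subset V$, we have $f(X \triangle Y) \le f(X)+f(Y)$. To see this, apply submodularity to $X$ and $Y^c$, and then to $X^c$ and $Y$. This gives $f(X\cap Y^c)+f(X\cup Y^c) \le f(X)+f(Y)$ and $f(X^c\cap Y)+f(X^c \cup Y)\le f(X)+f(Y)$. By symmetry, $f(X\cup Y^c)=f(X^c\cap Y)$. Next apply posimodularity, i.e.\ submodularity together with symmetry, to $X\cap Y^c$ and $X^c\cap Y$. These two sets are disjoint and their union is $X\triangle Y$, so $f(X\triangle Y)\le f(X\cap Y^c)+f(X^c\cap Y)$ (using $f(\emptyset)=0$). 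Adding the two inequalities and using symmetry gives $f(X\triangle Y)\le f(X)+f(Y)$.

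\emph{Base case $d=1$.} Let $A \in \cP_1$ and let $S$ be uniform in $\cP_1$. The map $S \mapsto S\triangle A$ is a bijection of $\cP_1$, so $S \triangle A$ is again uniform in $\cP_1$. Hence
\[
f(A) = f\bigl(S \triangle (S\triangle A)\bigr) \le \IE[f(S)] + \IE[f(S\triangle A)] = 2\,\IE_{S\in\cP_1}[f(S)].
\]
This is exactly where the factor $2$ in the statement comes from.

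\emph{Inductive step.} For general $d$, fix $A\in\sigma(\cP_1,\dots,\cP_d)$ and let $S$ be uniform in $\cP_d$. By the auxiliary inequality, $f(A)\le f(S)+f(A\triangle S)$. Taking expectations, the first term contributes $\IE_{S\in\cP_d} f(S)$; I would like to avoid paying $2\,\IE_{S\in\cP_d} f(S)$ here if possible, but this is the part that may fail. For the second term, I would use \Cref{restriction-inequalities} with the cut $S$:
\[
f(A)\le f(S)+\tfrac12\bigl(f^S_{\mathrm{sym}}(A\cap S)+f^{S^c}_{\mathrm{sym}}(A\cap S^c)\bigr).
\]
Then I would apply the induction hypothesis to the connectivity functions $f^S_{\mathrm{sym}}$ and $f^{S^c}_{\mathrm{sym}}$, with the traces $\cP_i|_S$ and $\cP_i|_{S^c}$ for $i<d$. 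Finally, I would use the right-hand inequality of \Cref{restriction-inequalities}, applied to a random $S_i\in\cP_i$, to recombine: $\tfrac12\bigl(f^S_{\mathrm{sym}}(S_i\cap S)+f^{S^c}_{\mathrm{sym}}(S_i\cap S^c)\bigr)\le f(S_i)$, where $S_i \cap S$ is uniform in $\cP_i|_S$ and independent pieces on $S$ and $S^c$ glue to a uniform element of $\cP_i$.

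\emph{Main obstacle.} The difficulty is that $A\cap S$ need not lie in $\sigma(\cP_1|_S,\dots,\cP_{d-1}|_S)$. The reason is that $S$ may contain several atoms of $\cP_d$, and $A$ can behave differently on each of them. To fix this, I would strengthen the induction statement so that it allows an extra partition that is "free." Concretely, the statement would apply to every $A$ that is, on each atom $T$ of a partition $\cQ$, an element of the trace algebra $\sigma(\cP_1|_T,\dots)$. I would then exploit that a random $S\in\cP_d$ splits each such collection of atoms independently. Failing that, I would peel off the atoms of $\cP_d$ one at a time using the $\triangle$ inequality, and the averaging over $S$ would absorb the $f(S)$ terms into $2\,\IE_{S\in\cP_d} f(S)$. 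Getting this bookkeeping to yield exactly the constant $2$, rather than a constant growing with $d$, is the step I expect to be hardest.
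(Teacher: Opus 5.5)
Your auxiliary inequality $f(X\triangle Y)\le f(X)+f(Y)$ and your base case $d=1$ are correct. However, the inductive step is left open, and the route you sketch does not close with the constant $2$.

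The membership obstacle comes only from dropping $\cP_d$. Since $X\mapsto X\cap S$ is a Boolean-algebra homomorphism $2^V\to 2^S$, you always have $A\cap S\in\sigma(\cP_1|_S,\dots,\cP_d|_S)$. So if you keep all $d$ traces and induct on $\sum_i|\cP_i|+|V|$, no strengthened hypothesis is needed. A smaller point on recombination: do not glue independent pieces, since independent elements of $\cP_i|_S$ and $\cP_i|_{S^c}$ need not form an element of $\cP_i$. Instead use the two traces $S_i\cap S$ and $S_i\cap S^c$ of a single uniform $S_i\in\cP_i$; each is marginally uniform, and linearity of expectation suffices.

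The real problem is the accounting. With $\cP_d|_S$ kept, the right-hand inequality of \Cref{restriction-inequalities} bounds the $\cP_d$-contribution only by $2\,\IE f(S_d)$. You therefore get $f(A)\le f(S)+2\sum_{i\le d}\IE f(S_i)$, and averaging over a random $S\in\cP_d$ leaves coefficient $3$ on the $\cP_d$ term. You also cannot expect the $\cP_d$-contribution to save $f(S)$ when $S$ has at least two atoms on each side. For example, take $f(U)=1$ for all $U\neq\emptyset,V$, four atoms, and $S$ the union of two of them. Then that contribution equals $1$, whereas $2\,\IE f(S_d)-f(S)=\tfrac34$.

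The missing idea is to cut along a single atom rather than a random element of the algebra. Take $T$ inclusion-minimal in $\cP_1\setminus\{\emptyset,V\}$. Then $\cP_1^T=\{\emptyset,T\}$, so $f^T_{\mathrm{sym}}$ vanishes on it. Since toggling the atom $T$ preserves the uniform distribution on $\cP_1$, you get the exact identity
\[
\IE_{S\in\cP_1}\bigl[f^{T^c}_{\mathrm{sym}}(S\cap T^c)\bigr]=\IE_{S\in\cP_1}\bigl[f(S\setminus T)+f(S\cup T)\bigr]-f(T)=2\,\IE_{S\in\cP_1}[f(S)]-f(T).
\]
This $-f(T)$ exactly cancels the $+f(T)$ paid in $f(A)\le f(T)+\tfrac12\bigl(f^T_{\mathrm{sym}}(A\cap T)+f^{T^c}_{\mathrm{sym}}(A\cap T^c)\bigr)$. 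For $i\ge 2$, the right-hand inequality of \Cref{restriction-inequalities} gives $2\,\IE f(S_i)$. The induction on $\sum_i|\cP_i|+|V|$ (applied to $f^T_{\mathrm{sym}}$ and $f^{T^c}_{\mathrm{sym}}$ with all $d$ traces) then closes with constant exactly $2$. This argument does not need your symmetric-difference inequality at all; your base case is subsumed by it.
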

\begin{proof}
    We proceed by strong induction on $\sum_i |\cP_i| + |V|$.
    Let $A \in \sigma(\cP_1, \dots, \cP_d)$.
    We need to prove that $f(A) \leq 2\sum_{i=1}^d \IE_{S \in \cP_i}[f(S)]$.
    If all the $\cP_i$ are trivial, then the claimed inequality is $0 \leq 0$, which is true.
    Now suppose otherwise.
    Then, without loss of generality, we get that $\cP_1 \neq \{\emptyset, V\}$.
    Let $T \in \cP_1 \setminus \{\emptyset, V\}$ be inclusion-minimal.
    Then for every $i$, the family $\cP_i^T \coloneqq \{P \cap T \mid P \in \cP_i\}$ is a $\sigma$-algebra on $T$ and $A \cap T \in \sigma(\cP_1^T, \dots, \cP_d^T)$, so by induction,
    \begin{equation*}
        f^T_{\mathrm{sym}}(A \cap T) \leq 2\sum_{i=1}^d \IE_{S' \in \cP_i^T}[f^T_{\mathrm{sym}}(S')].
    \end{equation*}
    Similarly for $T^c = V \setminus T$,
    \begin{equation*}
        f^{T^c}_{\mathrm{sym}}(A \cap T^c) \leq 2\sum_{i=1}^d \IE_{S' \in \cP_i^{T^c}} [f^{T^c}_{\mathrm{sym}}(S')].
    \end{equation*}
    Averaging the two inequalities and using \Cref{restriction-inequalities} yields
    \begin{equation}\label{intermediate-inductive-bound}
        f(A) - f(T) \leq \frac{f^T_{\mathrm{sym}}(A \cap T) + f^{T^c}_{\mathrm{sym}}(A \cap T^c)}{2} \leq \sum_{i=1}^d \left(\IE_{S' \in \cP_i^T} [f^T_{\mathrm{sym}}(S')] + \IE_{S' \in \cP_i^{T^c}} [f^{T^c}_{\mathrm{sym}}(S')] \right).
    \end{equation}
    For $S' \in \cP_i^T$, the number $|\{S \in \cP_i : S \cap T = S'\}| = \frac{|\cP_i|}{|\cP_i^T|}$ is independent of $S'$.
    Consequently, when $S$ is sampled uniformly at random from $\cP_i$, then $S \cap T$ has the uniform distribution on $\cP_i^T$, and similarly $S \cap T^c$ has the uniform distribution on $\cP_i^{T^c}$.
    Therefore, by \Cref{restriction-inequalities},
    \begin{equation*}
        \IE_{S' \in \cP_i^T} [f^T_{\mathrm{sym}}(S')] + \IE_{S' \in \cP_i^{T^c}} [f^{T^c}_{\mathrm{sym}}(S')] = \IE_{S \in \cP_i} [f^T_{\mathrm{sym}}(S \cap T) + f^{T^c}_{\mathrm{sym}}(S \cap T^c)] \leq 2 \IE_{S \in \cP_i} [f(S)].
    \end{equation*}
    For $i = 1$, since we chose $T$ to be minimal in $\cP_1 \setminus \{\emptyset, V\}$, we have $\cP_1^T = \{\emptyset, T\}$ and hence, $f^T_{\mathrm{sym}}$ is $0$ on $\cP_1^T$.
    Therefore, we can improve the previous inequality as follows:
    \begin{align*}
        \IE_{S' \in \cP_1^T} [f^T_{\mathrm{sym}}(S')] + \IE_{S' \in \cP_1^{T^c}} [f^{T^c}_{\mathrm{sym}}(S')]
        &= \IE_{S' \in \cP_1^{T^c}} [f^{T^c}_{\mathrm{sym}}(S')] \\
        &= \IE_{S \in \cP_1} [f^{T^c}_{\mathrm{sym}}(S \cap T^c)] \\
        &= \IE_{S \in \cP_1} [f(S \cap T^c) + f(T^c \setminus S) - f(T^c)] \\
        &= \IE_{S \in \cP_1} [f(S \setminus T) + f(S \cup T) - f(T)]\\
        &= 2 \IE_{S \in \cP_1} [f(S)] - f(T).
    \end{align*}
    We used the fact that $T$ is a block in the partition defining $\cP_1$, and so it appears in a random $S \in \cP_1$ with probability $1/2$. Adding it or removing it with probability $1/2$ does not change the uniform distribution over $\cP_1$.
    
    Substituting these bounds into \Cref{intermediate-inductive-bound}, we conclude
    \begin{equation*}
        f(A) - f(T)
        \leq 2 \IE_{S \in \cP_1} [f(S)] - f(T) + 2 \sum_{i=2}^d \IE_{S \in \cP_i} [f(S)]
        = 2 \sum_{i=1}^d \IE_{S \in \cP_i} [f(S)] - f(T).
    \end{equation*}
    Adding $f(T)$ to both sides yields the required bound.
\end{proof}

Our generalized construction starts from identifying $[n]$ with the cube $[r]^d$ (assuming that
$n = r^d$, or again adding dummy elements if necessary). The $\sigma$-algebra $\cP_i$ for each $i$ is generated by a partition into layers orthogonal to dimension $i$. Clearly, each pair is separated by at least one $\cP_i$. We will need to include more sets in our family than before (in fact all possible sets in each $\sigma$-algebra $\cP_i$), but the advantage of this construction is that we can work with a lower dimension $d$. The next theorem captures the approximation provided by this family.

\begin{theorem}\label{arbitrary-hypercube-covering-bound}
    Suppose that $V$ has cardinality $n$ with $2 \leq n \leq r^d$ for positive integers $n, r, d$.
    Then there exists a non-empty $\cF \subset 2^V$ of size at most $d 2^r$ with the following property:
    For every submodular connectivity function $f : 2^V \to \IR$,
    \begin{equation*}
        \max_{A \subset V} f(A) \leq 2 d \max_{S \in \cF} f(S).
    \end{equation*}
\end{theorem}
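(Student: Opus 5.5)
The plan is to apply \Cref{partition-refinment-bound} to the layer partitions of the grid $[r]^d$. Since $n \leq r^d$, I would fix an injection of $V$ into $[r]^d$ and, for each $i \in [d]$ and $k \in [r]$, define the layer $L_{i,k} = \{x \in V : x_i = k\}$. For each $i$, the nonempty layers $L_{i,1}, \dots, L_{i,r}$ partition $V$, so they generate a $\sigma$-algebra $\cP_i$ with at most $2^r$ elements; these elements are exactly the unions of layers orthogonal to dimension $i$. I would take $\cF = \bigcup_{i=1}^d \cP_i$. This family is nonempty (it contains $\emptyset$) and has size at most $d 2^r$.

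The key observation is that $\sigma(\cP_1, \dots, \cP_d) = 2^V$. Two distinct points of $V$ differ in some coordinate $i$, so they lie in different blocks of $\cP_i$. Hence each singleton $\{x\} = \bigcap_i L_{i,x_i}$ belongs to the generated $\sigma$-algebra, and therefore every subset of $V$ does as well. \Cref{partition-refinment-bound} then gives, for every submodular connectivity function $f$,
\begin{equation*}
\max_{A \subset V} f(A) \leq 2 \sum_{i=1}^d \IE_{S \in \cP_i}[f(S)] \leq 2 \sum_{i=1}^d \max_{S \in \cP_i} f(S) \leq 2d \max_{S \in \cF} f(S),
\end{equation*}
since an expectation over $\cP_i$ is at most the maximum over $\cP_i$, and $\cP_i \subset \cF$.

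I do not expect a serious obstacle, because all of the work is done by \Cref{partition-refinment-bound}. The only points needing care are the following. Some layers may be empty when $n < r^d$; empty blocks do not change the $\sigma$-algebra and only lower the count below $2^r$. The uniform distribution over $\cP_i$ is the one that \Cref{partition-refinment-bound} refers to, and bounding its mean by the maximum is trivial.
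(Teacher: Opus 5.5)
Your proposal is correct and matches the paper's proof: embed $V$ into $[r]^d$, take $\cP_i$ to be the $\sigma$-algebra of unions of layers orthogonal to coordinate $i$, and note that $\sigma(\cP_1,\dots,\cP_d)=2^V$. Then apply \Cref{partition-refinment-bound} with $\cF=\bigcup_i \cP_i$ and bound each expectation by the maximum. Your remark that empty layers only shrink $\cP_i$ is a small detail the paper leaves implicit.
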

\begin{proof}
    By $n \leq r^d$, we may identify $V$ with a subset of $[r]^d$.
    For $1 \leq i \leq d$, we let $\cP_i$ be the $\sigma$-algebra on $V$ generated by the $i$th coordinate, i.e.,
    \begin{equation*}
        \cP_i \coloneqq \left\{\{x \in V : x_i \in L\} : L \subset [r] \right\}.
    \end{equation*}
    We note that every $\cP_i$ has cardinality at most $2^r$, and that $\sigma(\cP_1, \dots, \cP_d) = 2^V$.
    By \Cref{partition-refinment-bound},
    \begin{equation*}
        \max_{A \subset V} f(A) \leq 2\sum_{i=1}^d \IE_{S \in \cP_i} f(S) \leq 2d \max_{S \in \cF} f(S),
    \end{equation*}
    where $\cF \coloneqq \bigcup_{i=1}^d \cP_i \neq \emptyset$. Finally, we indeed have $|\cF| \leq \sum_{i=1}^d |\cP_i| \leq d 2^r$.
\end{proof}

\begin{corollary}
    Let $n \geq 3$ and let $V$ be a set of size $n$.
    Then for any $\alpha \in (0, \log_2 n / \log_2 \log_2 n]$, there exists a non-empty $\cF \subset 2^V$ of size at most 
    $\frac{4 \log_2 n}{\alpha \log_2 \log_2 n} 2^{(\log_2 n)^\alpha}$
    such that any submodular connectivity function $f : 2^V \to \IR$ satisfies
    \begin{equation*}
        \max_{S \in F} f(S) \geq \frac{\alpha \log_2 \log_2 n}{4 \log_2 n} \max_{A \subset V} f(A).
    \end{equation*}
\end{corollary}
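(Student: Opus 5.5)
This follows by applying \Cref{arbitrary-hypercube-covering-bound} with suitable $r$ and $d$. Write $L = \log_2 n$ and $\ell = \log_2 \log_2 n$; since $n \geq 3$ we have $L > 1$ and $\ell > 0$. We set
\[
r = \lfloor L^\alpha \rfloor, \qquad d = \left\lceil \frac{L}{\log_2 r} \right\rceil .
\]
Then $r^d \geq 2^{L} = n$, so the hypothesis $n \leq r^d$ holds. The theorem then gives a family of size at most $d 2^r \leq d\, 2^{L^\alpha}$ with approximation factor $\frac{1}{2d}$. It remains to check that $d \leq \frac{2L}{\alpha \ell}$, which gives the factor
\[
\frac{1}{2d} \geq \frac{\alpha \ell}{4L}
\]
and the size bound $d\,2^{r} \leq \frac{2L}{\alpha\ell} 2^{L^\alpha} \leq \frac{4L}{\alpha \ell}2^{L^\alpha}$, which is what the corollary asserts.

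The main step is bounding $d$, and it needs $r \geq 2$ so that $\log_2 r > 0$. When $L^\alpha \geq 2$, we have $\lfloor L^\alpha\rfloor \geq L^\alpha/2$, so $\log_2 r \geq \alpha \ell - 1$. This is at least $\alpha\ell/2$ once $\alpha\ell \geq 2$, giving
\[
\frac{L}{\log_2 r} \leq \frac{2L}{\alpha \ell}.
\]
The ceiling adds at most $1$. Since $\alpha \leq L/\ell$ means $\frac{L}{\alpha \ell} \geq 1$, we get $d \leq \frac{L}{\alpha\ell}+\frac{L}{\alpha\ell}\cdot(\text{const})$, and the slack of a factor $2$ in the target constant $4$ should absorb this. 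I expect some care in this accounting: the cleaner route is to choose $r = \lceil L^{\alpha}\rceil$ rather than the floor, so that $\log_2 r \geq \alpha \ell$ exactly. Then
\[
d \leq \frac{L}{\alpha\ell} + 1 \leq \frac{2L}{\alpha\ell},
\]
and $2^r \leq 2 \cdot 2^{L^\alpha}$, giving size at most $\frac{4L}{\alpha\ell}2^{L^\alpha}$ and factor at least $\frac{\alpha\ell}{4L}$. This choice avoids the case analysis, so I will use it.

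The remaining edge cases are when $L^\alpha < 2$, so that $r = \lceil L^\alpha \rceil$ could be $1$, which happens when $\alpha\ell < 1$. There $\frac{\alpha \ell}{4L}$ is tiny, so I would instead take $r=2$ and $d=\lceil L\rceil$, i.e.\ \Cref{submod-connectivity-log}. This gives factor $\frac{1}{\lceil L\rceil} \geq \frac{1}{2L} \geq \frac{\alpha\ell}{4L}$ and size at most $\lceil L \rceil \leq 2L \leq \frac{4L}{\alpha\ell}2^{L^\alpha}$. The upper end $\alpha = L/\ell$ gives $L^\alpha = n$ and is covered by the general argument. Nonemptiness of $\cF$ is inherited from the theorem.
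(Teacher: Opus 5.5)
Your final argument is correct and is essentially the paper's proof. Both use the same choice $r=\lceil(\log_2 n)^\alpha\rceil$ and $d=\lceil\log_r n\rceil$ in \Cref{arbitrary-hypercube-covering-bound}, with the bound $d\le 1+\frac{\log_2 n}{\alpha\log_2\log_2 n}\le\frac{2\log_2 n}{\alpha\log_2\log_2 n}$; the paper gets the last step from $r\le n$ and you get it from $\alpha\le\log_2 n/\log_2\log_2 n$, which amounts to the same fact. Your separate case $(\log_2 n)^\alpha<2$ is vacuous: $n\ge 3$ and $\alpha>0$ give $(\log_2 n)^\alpha>1$, so $\lceil(\log_2 n)^\alpha\rceil\ge 2$ always, and the floor-based detour can simply be dropped.
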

\begin{proof}
    Consider $\alpha \in (0, \log_2 n / \log_2 \log_2 n]$ and
    let $r$ be the least integer satisfying $r \geq (\log_2(n))^\alpha > 1$ and let $d$ be the least integer satisfying $r^d \geq n$.
    Then $r \leq n$ by $(\log_2 n)^\alpha \leq 2^{\log_2 n} = n$.
    Hence,
    \begin{equation*}
        d \leq 1 + \log_r(n) \leq 2 \frac{\log_2 n}{\log_2 r} \leq \frac{2 \log_2 n}{\alpha \log_2 \log_2 n}.
    \end{equation*}
    By \Cref{arbitrary-hypercube-covering-bound}, there exists a non-empty $\cF \subset 2^V$ of size at most $d 2^r \leq \frac{4 \log_2 n}{\alpha \log_2 \log_2 n} 2^{(\log_2 n)^\alpha}$ such that any submodular connectivity function $f : 2^V \to \IR$ satisfies
    \begin{equation*}
        \max_{A \subset V} f(A) \leq \frac{4 \log_2 n}{\alpha \log_2 \log_2 n} \max_{S \in \cF} f(S).\qedhere
    \end{equation*}
\end{proof}
A choice of $\alpha = 4/5$ and the reduction from nonnegative submodular functions to submodular connectivity functions described in \Cref{sec:reduction-to-submodular-connectivity-functions} gives the statement of Theorem~\ref{thm:submod-loglog}, with  $|\cF| = n^{o(1)}$.
We remark that other choices of $\alpha>0$ give a broad range of possible sizes of $\cF$, from subpolynomial to quasipolynomial, with a trade-off for the approximation factor. However, the asymptotic behavior of $\Theta(\frac{\log n}{\log \log n})$ is unchanged except for the leading constant. Hence, increasing the family size has little benefit in this construction.
\begin{remark}
    If we choose $\alpha$ growing with $n$, then we can achieve a much better approximation, at the cost of a superpolynomially sized $\cF$.
    For example, a choice of $\alpha = \epsilon \log_2 n / \log_2 \log_2 n$ with $\epsilon \in (0, 1]$ gives a family of size at most $\frac{4}{\epsilon} 2^{n^\epsilon}$ and an approximation factor of $\frac{\epsilon}{4}$.
\end{remark}

\subsection{A lower bound for pairwise independent distributions}

Note that the only property we use in our proofs so far is that a given family of sets or partitions separates every pair of elements. Moreover, one might conjecture that the weakness in our bounds and the reason they are non-constant is that different pairs are not separated uniformly often. In particular, in the $\{0,1\}^d$ hypercube example, our family of $d$ sets separates nearest neighbors only once, but antipodal pairs $d$ times. Hence it is natural to ask whether a {\em uniformly pairwise independent distribution} over sets (see Definition~\ref{def:pairwise-indep}) would achieve a constant approximation factor. Unfortunately, this turns out to be false, as the following (quantitative) restatement of \Cref{thm:pairwise-indep} shows.

\begin{theorem}
    For infinitely many positive integers $n$ there exists a family $\cF \subset 2^V$, where $|V| = n$ and with the following properties:
    The uniform distribution on $\cF$ is pairwise independent and there exists a nonzero submodular connectivity function $f : 2^V \to \IR$ such that
    \begin{equation*}
        \max_{H \in \cF} f(H) \leq \sqrt{\frac{2}{\pi \log_2(n / 2)}} \max_{A \subset V} f(A).
    \end{equation*}
\end{theorem}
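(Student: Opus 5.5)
We take $V = \F_2^d$ with $d$ odd, so $n = 2^d$ and $\log_2(n/2) = d-1$. The function $f$ will be the connectivity function of a binary matroid coming from a self-dual Reed--Muller code, and $\cF$ will be the family of affine hyperplanes, which is pairwise independent. Set $k = (d-1)/2$ and let $M$ be the binary matroid on $V$ represented by the vectors $v(x) = (x^S)_{S \subset [d], |S| \le k}$, i.e.\ the evaluations at $x$ of all multilinear monomials of degree at most $k$. Its rank function is $r(A) = \dim$ of the space of degree-$\le k$ polynomial functions restricted to $A$. Define $f(A) = r(A) + r(V\setminus A) - r(V)$. Matroid rank is submodular, so $f$ is symmetric, submodular and nonnegative with $f(\emptyset)=0$, i.e.\ a submodular connectivity function.

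\emph{The family.} Let $\cF = \{H_{a,b} : a \in \F_2^d, b \in \F_2\}$ with $H_{a,b} = \{x : a\cdot x = b\}$, taken with the uniform distribution (as a multiset if needed). Fix $x \neq y$. Since $a$ is uniform, $a\cdot(x+y)$ is a uniform bit, and $b$ is an independent uniform bit. Hence $(a\cdot x + b, a \cdot y + b)$ is uniform on $\F_2^2$, which gives uniform pairwise independence.

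\emph{Value on $\cF$.} For $a = 0$ the set is $\emptyset$ or $V$, so $f = 0$. For $a \ne 0$, $H_{a,b}$ and its complement are affine subspaces of dimension $d-1$. Degree-$\le k$ functions on such a subspace form the Reed--Muller space of dimension $R(k,d-1) = \sum_{i\le k}\binom{d-1}{i}$; this follows from an affine change of coordinates, since restriction is onto degree-$\le k$ polynomials in the $d-1$ free coordinates. Also $r(V) = R(k,d)$, because monomials are linearly independent as functions on $\F_2^d$. Pascal's identity $R(k,d) = R(k,d-1)+R(k-1,d-1)$ then gives
\[
f(H_{a,b}) = 2R(k,d-1) - R(k,d) = \binom{d-1}{k} = \binom{d-1}{(d-1)/2}.
\]

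\emph{Maximum of $f$.} For $d$ odd, $R(k,d) = 2^{d-1} = n/2$ by symmetry of binomial coefficients. The code $\mathrm{RM}(k,d)$ is self-dual, because its dual is $\mathrm{RM}(d-k-1,d) = \mathrm{RM}(k,d)$. Therefore $M$ equals its dual matroid $M^*$. Take a basis $B$ of $M$. Then $V\setminus B$ is a basis of $M^* = M$, so $r(B) = r(V\setminus B) = n/2$ and $f(B) = n/2$. It follows that
\[
\frac{\max_{H\in\cF} f(H)}{\max_A f(A)} \le \frac{\binom{d-1}{(d-1)/2}}{2^{d-1}} \le \sqrt{\frac{2}{\pi (d-1)}}.
\]
The last inequality is the standard bound $\binom{2m}{m} \le 4^m/\sqrt{\pi m}$ with $2m = d-1$. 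This is exactly the claimed bound, for infinitely many $n = 2^d$ with $d$ odd.

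The main obstacle is the step $M = M^*$. It needs the fact that the dual of $\mathrm{RM}(k,d)$ is $\mathrm{RM}(d-k-1,d)$, together with the correspondence between code duality and matroid duality (the column matroid of a generator matrix of $C^\perp$ is the dual matroid). I would cite the Reed--Muller duality or prove it by checking that $\sum_x x^S x^T = 0$ whenever $|S\cup T| < d$. The rank computation on hyperplanes via affine invariance is routine.
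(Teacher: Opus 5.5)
Your proposal is correct and follows the paper's proof essentially step for step. It uses the same family of affine hyperplanes together with $\emptyset, V$, the same connectivity function of the degree-$(d-1)/2$ Reed--Muller binary matroid, the same affine-invariance computation giving $f(H)=\binom{d-1}{(d-1)/2}$, and the same central binomial bound (take $d\ge 3$ so that $\log_2(n/2)>0$). The only difference is cosmetic: you get a set $B$ with $r(B)=r(V\setminus B)=n/2$ from self-duality of $\mathrm{RM}(k,d)$ and the code/matroid duality correspondence, whereas the paper proves this complement-of-a-basis fact directly from the self-orthogonality $MM^T=0$ of the evaluation matrix, via $\det(M_{V\setminus A})^2=\det(M_{V\setminus A}M_{V\setminus A}^T)=\det(M_AM_A^T)=\det(M_A)^2\neq 0$.
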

\begin{proof}
    We claim that $n = 2^d$ works for any positive integer $m$ and $d = 2m + 1$.
    With these parameters, set $V = \IF_2^d$ and let
    \begin{equation*}
        \cF = \{\text{ affine hyperplanes in } V\} \cup \{\emptyset, V\} = \{H_{a, b} : a \in \IF_2^{d}, b \in \IF_2\},
    \end{equation*}
    where $H_{a, b} = \{x \in \IF_2^d : a \cdot x = b\}$.
    Then $\cF$ has cardinality $2n$ by injectivity of $(a, b) \mapsto H_{a, b}$.

    We next show that $\cF$ supports a pairwise independent distribution.
    For $x \in \IF_2^d$ and $H = H_{a, b}$ where $(a,b)$ is uniformly random in $\F \times \F$, we have
    \begin{equation*}
        \IP(x \in H \mid a) = \frac{1}{2}
    \end{equation*}
    since $a$ and $x \in H$ uniquely determine $b$.
    Similarly for the same random $H$, when $x, y \in \IF_2^d$ are distinct, then $x_i \neq y_i$ for some $i$, and hence for $L \subset \{x, y\}$,
    \begin{equation*}
        \IP(\{x, y\} \cap H = L \mid a_{[d] \setminus \{i\}}) = \frac{1}{4}
    \end{equation*}
    since $a_{[d] \setminus \{i\}}$ and $\{x, y\} \cap H = L$ uniquely determine $a_i$ and $b$.
    Taking expectations shows the uniform distribution on $\cF$ is pairwise independent.
    
    Let $P$ be the set of multilinear monomials of degree at most $m$ over $\IF_2$ in $d$ variables, so
    \begin{equation*}
        P = \{x_T = \prod_{j \in T} x_j \mid T \subset [d], |T| \leq m\} \subset \IF_2[x_1, \dots, x_d].
    \end{equation*}
    Then $P$ has size $\sum_{i = 0}^m \binom{d}{i} = n / 2$.
    The following observation follows by induction on $d$:
    \begin{observation}\label{P-lin-indep}
        Any collection of distinct multilinear monomials are linearly independent as functions $\IF_2^d \to \IF_2$.
    \end{observation}
    Therefore, the matrix $M \in \IF_2^{P \times V}$ with $M_{p, x} = p(x)$ has $|P| = n / 2$ rows, $n = |\IF_2^d|$ columns, and full rank $\rk(M) = |P| = n / 2$.
    Then $\rho : 2^V \to \IR$, $\rho(A) = \rk(M_A)$ with $M_A = M\mid_{P \times A}$ is the rank function of the linear matroid represented by $M$. It is a standard fact that the rank function of any matroid is monotone submodular (see Sections 39.1, 39.4 and 39.7 in \cite{schrijver-comb-opt}).
    Observe also that $\rho(\emptyset) = 0$ and $\rho(V) = n / 2$.

    Let us define $f:2^V \to \R$,
    $$ f(S) = \rho(S) + \rho(V \setminus S) - \rho(V). $$
    Then $f$ is symmetric submodular and $f(\emptyset) = f(V) = 0$, hence $f$ is a submodular connectivity function. 
   
    \begin{claim}
        $M M^T = 0$.
    \end{claim}
    \begin{proof}
        The entry indexed by $(x_T, x_{T'})$ of $M M^T \in \IF_2^{P \times P}$ equals $\sum_{x \in \IF_2^d} x_T x_{T'} = \sum_{x \in \IF_2^d} x_{T \cup T'}$.
        By $|T \cup T'| \leq 2m < d$, we have $T \cup T' \subsetneq [d]$, which implies $\sum_{x \in \IF_2^d} x_T x_{T'} = 0$.
    \end{proof}
    Since $M$ has full rank $n / 2$, we can choose $A \subset \IF_2^d$ of size $n / 2$, so that the square matrix $M_A$ is invertible, hence $\rho(A) = n / 2$.
    We have $\det(M_A M_A^T) = (\det (M_A))^2 \neq 0$. On the other hand, we have
    $M_A M_A^T + M_{V \setminus A} M_{V \setminus A}^T = M M^T = 0$. Hence
    $(\det (M_{V \setminus A}))^2 = \det(M_{V \setminus A} M_{V \setminus A}^T) = \det(M_A M_A^T) \neq 0$ and $M_{V \setminus A}$ is also invertible, which means $\rho(A) = \rho(V \setminus A) = n/2$.
    This shows
    \begin{equation*}
        f(A) = \rho(A) + \rho(V \setminus A) - \rho(V) = n / 2.
    \end{equation*}
    
    We now determine $\rho(H)$ for $H \in \cF$.
    If $H \in \{\emptyset, V\}$, then $f(H) = 0$ since $f$ is a submodular connectivity function.
    Now consider the case that $H$ is an affine hyperplane in $\cF$.
    Since the span of $P$ is preserved under invertible affine linear transformations on $\IF_2^d$, we get $\rho(H) = \rho(H_{e_1, 0})$.
    So, it is enough to consider the case $H = H_{e_1, 0}$.
    The restriction of $P$ to $H_{e_1, 0}$ is the set of multilinear monomials of degree at most $m$ over $H_{e_1, 0} \cong \IF_2^{d-1}$.
    So another application of \Cref{P-lin-indep} yields
    \begin{equation*}
        \rho(H) = \sum_{i=0}^{m} \binom{d-1}{i},
    \end{equation*}
    and similarly $\rho(H^c) = \rho(H)$.
    Consequently, by $2m = d-1$,
    \begin{equation*}
        f(H) = 2\sum_{i=0}^{m} \binom{d-1}{i} - n / 2 = \binom{d-1}{m}.
    \end{equation*}
    We conclude that for every $H \in \cF$,
    \begin{equation*}
        f(H) \leq \binom{d-1}{m} \leq \sqrt{\frac{1}{\pi m}} 2^{d-1} = \sqrt{\frac{2}{\pi (\log_2(n / 2))}} \frac{n}{2} = \sqrt{\frac{2}{\pi \log_2(n / 2)}} f(A).\qedhere
    \end{equation*}
\end{proof}

\subsection{A constant-factor universal family for polynomially representable submodular functions}

Note that the lower-bound example in the previous section is a submodular function in the form $f(S) = \rho(S) + \rho(V \setminus S) - \rho(V)$, where $\rho$ is the rank function of a matroid representable over $\F_2$, i.e. a binary matroid. Every such function $f$ has a representation of polynomial size, in this case by a matrix over $\F_2$. We call any such class of functions polynomially representable.

\begin{definition}
For a class $\cC$ of set functions, denote by $\cC_n \subset \cC$ the functions in $\cC$ defined on a ground set of size $n$. We call a class $\cC$ of set functions polynomially representable, if for each $n$ there is an injective encoding $e: \cC_n \to \{0,1\}^m$ such that $m = poly(n)$.
\end{definition}

Clearly, the number of functions in any polynomially representable class is simply exponential, in the form $2^{poly(n)}$. 
One example is the class of matroid rank functions for matroids representable over a fixed finite field $\F$, and the associated matroid connectivity functions $f(S) = r(S) + r(V \setminus S) - r(V)$.
In fact, even the class of matroid rank functions for representable matroids (over all fields) is polynomially representable by Theorem 1.1 in \cite{there-are-few-representable-matroids}.
Submodular functions, or even general matroid rank functions, are not polynomially representable; the number of possible matroids on $[n]$ is doubly exponential \cite{schrijver-comb-opt}. 
For any polynomially representable subclass of submodular functions, there is a simple probabilistic argument showing that there is a universal family of polynomial size, achieving a constant factor. 

\begin{theorem}
\label{thm:poly-representable}
For any polynomially representable subclass $\cC$ of nonnegative submodular functions and any fixed $\epsilon>0$, there is a universal family for each $n$ of size $|\cF_n| = poly(n)$ such that for every $f \in \cC_n$,
$$ \max_{S \in \cF_n} f(S) \geq \left(\frac{1}{4} - \epsilon \right) \max_{A \subset [n]} f(A).$$
\end{theorem}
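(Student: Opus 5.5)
The plan is a probabilistic construction with a union bound over the class $\cC_n$. Since the encoding $e:\cC_n \to \{0,1\}^m$ is injective with $m = \poly(n)$, we have $|\cC_n| \leq 2^{m}$. We draw $N$ independent uniformly random sets $R_1,\dots,R_N \subset [n]$ and set $\cF_n = \{R_1,\dots,R_N\}$. The goal is to show that for a suitable $N = \poly(n)$, with positive probability every $f \in \cC_n$ satisfies $\max_k f(R_k) \geq (\frac14 - \epsilon)\max_A f(A)$.

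Fix $f \in \cC_n$ with $\mathrm{OPT} = \max_A f(A) > 0$; if $\mathrm{OPT}=0$ there is nothing to prove. By the lemma of \cite{FeigeMV07} quoted in the introduction, $\E[f(R)] \geq \frac14 \mathrm{OPT}$ for uniformly random $R$. Since $0 \leq f(R) \leq \mathrm{OPT}$, the normalized variable $X = f(R)/\mathrm{OPT}$ lies in $[0,1]$ and has mean at least $\frac14$. A reverse Markov argument applied to $1-X$ gives
$$\Pr\left[X \geq \tfrac14 - \epsilon\right] \geq 1 - \frac{1-\E X}{1-(\frac14-\epsilon)} \geq 1 - \frac{3/4}{3/4+\epsilon} = \frac{\epsilon}{3/4+\epsilon} \geq \epsilon.$$
So each random set independently succeeds for $f$ with probability $p \geq \epsilon$. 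The probability that all $N$ sets fail for $f$ is therefore at most $(1-\epsilon)^N \leq e^{-\epsilon N}$.

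A union bound over $\cC_n$ bounds the failure probability by $2^m e^{-\epsilon N}$. This is less than $1$ once $N > m \ln 2/\epsilon$, which is $\poly(n)$ for fixed $\epsilon$. Hence some realization of $\cF_n$ works simultaneously for all $f \in \cC_n$. The empty class and the case $\mathrm{OPT}=0$ are trivial.

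I expect no real obstacle. The only delicate point is to get a success probability bounded away from zero, rather than just a bound on the expectation, and the reverse Markov step does this using $f \leq \mathrm{OPT}$. The resulting family is non-explicit, consistent with the remark in the text. Note that the argument is applied directly to $f$ rather than through the connectivity reduction, so no factor of $2$ is lost and the constant $\frac14$ comes straight from \cite{FeigeMV07}.
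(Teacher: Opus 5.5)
Your proposal is correct and follows essentially the same route as the paper. Both use independent uniform random sets, upgrade the Feige--Mirrokni--Vondr\'ak bound $\E[f(R)] \geq \frac14 \max_A f(A)$ to a constant per-set success probability using $f \leq \max_A f(A)$, and finish with a union bound over the at most $2^{\poly(n)}$ functions in $\cC_n$. The one nit is that your step $\frac{\epsilon}{3/4+\epsilon} \geq \epsilon$ needs $\epsilon \leq \frac14$; like the paper, you may assume this without loss of generality, or simply keep $\frac{\epsilon}{3/4+\epsilon}$ as the constant.
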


\begin{proof}
We can assume $\epsilon \in (0,\frac{1}{4})$.
For each $n$, let $\cF_n = \{ R_1,\ldots,R_r \}$ be a collection of $r(n)$ independently and uniformly sampled subsets of $[n]$. (We determine $r(n)$ later.) We use the following well-known bound \cite{FeigeMV07}: For each of our random sets, $\E[f(R_i)] \geq \frac14 M$ where $M = \max_{A \subset [n]} f(A)$. Since $f(R_i) \in [0,M]$, we also obtain $\IP[f(R_i) < (\frac14 - \epsilon) M] \leq 1-\epsilon$. (Otherwise $\E[f(R_i)] < (1-\epsilon) (\frac14 - \epsilon) M + \epsilon M < \frac14 M$.) We sample $r$ sets independently, and so
$$ \IP\left[\forall i \in [r]; f(R_i) < \left(\frac14 - \epsilon \right) M \right] \leq (1 - \epsilon)^r.$$
Since $\cC$ is polynomially representable, we have $|\cC_n| \leq 2^{n^C}$ for some constant $C>0$. By the union bound,
$$ \IP\left[\exists f \in \cC_n; \forall i \in [r]; f(R_i) < \left(\frac14 - \epsilon \right) M \right] \leq 2^{n^C} (1 - \epsilon)^r.$$
Let us choose $r = \lceil \frac{1}{\epsilon} n^C \rceil$; then the probability bound is at most $ 2^{n^C} e^{-n^C} \ll 1$. Hence, there exists a universal family $\cF_n$ which approximates the maximum within $\frac{1}{4} - \epsilon$ for all $f \in \cC_n$.
\end{proof}

\section{Universal families for absolute XOS functions}

Next, we move on to the more general class of absolute XOS functions.  Here we prove the upper and lower bounds announced in Theorems~\ref{thm:XOS-upper} and \ref{thm:XOS-lower}, which are matching up to a constant factor. We will need the following known results.

\paragraph{Anticoncentration.}

We will use the following anticoncentration result from \cite{rademacher-anticoncentration}.
\begin{theorem}\label{montgomery-smith}
    There exists a universal constant $C > 1$ such that for any nonzero $w \in \ell_2$ and any positive real $t$ and a independent, and uniformly random $x_1, x_2, \dots \in \{-1, 1\}$, it holds that
    \begin{equation*}
        \IP\left(\sum_{i=1}^\infty x_i w_i > C^{-1} K_{1, 2}(w, t)\right) \geq C^{-1} \exp(-C t^2),
    \end{equation*}
    where $K_{1, 2}(w, t) = \inf\{\|w'\|_{\ell_1} + t\|w''\|_{\ell_2} : w', w'' \in \ell_2, w' + w'' = w\}$ is a mixture of the $\ell_1$- and $\ell_2$-norms.
\end{theorem}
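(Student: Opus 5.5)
The statement is the anticoncentration inequality of Montgomery-Smith, which the paper cites rather than proves. If I were to prove it, I would use the following direct argument.

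\textbf{Reductions.} First, reduce to $w$ nonnegative and nonincreasing. Flipping signs of the $x_i$ preserves their distribution, and the $K_{1,2}$ functional is invariant under signs and permutations. I would then use Holmstedt's formula, valid up to an absolute constant: with $k = \lceil t^2 \rceil$,
\begin{equation*}
K_{1,2}(w,t) \asymp \sum_{i \le k} w_i + t \Bigl(\sum_{i > k} w_i^2\Bigr)^{1/2} =: H + t\,\|w_{>k}\|_2 .
\end{equation*}
For $t < 1$, a single Paley--Zygmund argument should suffice. Khintchine's inequality gives $\IE(\sum x_i w_i)^4 \le 3 (\IE (\sum x_i w_i)^2)^2$. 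Together with symmetry, this yields $\IP(\sum x_i w_i > c\|w\|_2) \ge c$, and $\|w\|_2 \ge K_{1,2}(w,t)$ by taking $w'=0$. So from now on assume $t \ge 1$.

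\textbf{Case 1: the $\ell_1$ part dominates, $H \ge t\|w_{>k}\|_2$.} Consider the event that $x_1 = \dots = x_k = +1$, which has probability $2^{-k} \ge \exp(-C t^2)$. On this event the head contributes exactly $H$. The tail $\sum_{i>k} x_i w_i$ is independent of the head and symmetric, so it is nonnegative with probability at least $1/2$. Hence $\sum x_i w_i \ge H \ge \tfrac12 K$-ish with the required probability.

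\textbf{Case 2: the $\ell_2$ part dominates, $t\|w_{>k}\|_2 > H$.} Set $s = \|w_{>k}\|_2$. Since $w$ is nonincreasing, every tail coordinate satisfies $w_i \le w_k \le H/k \le s/t$, so $w_i^2 \le s^2/t^2$. Greedily partition the indices $i>k$ into consecutive blocks $B_1,\dots,B_m$, closing a block as soon as its squared mass reaches $s^2/t^2$. Each closed block then has squared mass in $[s^2/t^2,\,2s^2/t^2]$, and $m \asymp t^2$; any leftover mass is merged into the last block. For each block, Paley--Zygmund on $\sum_{i\in B_j} x_i w_i$ gives
\begin{equation*}
\IP\Bigl(\sum_{i \in B_j} x_i w_i > c\,\|w_{B_j}\|_2\Bigr) \ge p
\end{equation*}
for an absolute constant $p>0$. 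The blocks are independent, and the head is nonnegative with probability $\ge 1/2$ independently of them. So with probability at least $\tfrac12 p^{m} \ge \exp(-C t^2)$ the total sum exceeds
\begin{equation*}
c\, m \cdot \frac{s}{t} \gtrsim c\, t s .
\end{equation*}
This is comparable to $K_{1,2}(w,t)$.

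\textbf{Main obstacle.} I expect the block decomposition in Case 2 to be the delicate step. One needs to make sure no single coordinate is too large to fit in a block, which is exactly why the case split via $w_k \le H/k$ is used. One also needs to check that the leftover block and the rounding in $m \asymp t^2$ only cost constants. Infinite sequences $w \in \ell_2$ are handled by truncation and a limiting argument, since the bounds are uniform in the length.
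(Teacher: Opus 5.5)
The paper gives no proof of this statement. It is quoted from Montgomery-Smith and used only through Theorem~\ref{chernoff-converse}. Your argument is a correct, self-contained proof, in the same spirit as the standard proof of the lower tail:
reduce to nonnegative nonincreasing $w$;
realize the $\ell_1$ part of $K_{1,2}$ (the first $k=\lceil t^2\rceil$ coordinates) by forcing those signs, at cost $2^{-k}$;
realize the $\ell_2$ part by cutting the tail into at most $t^2$ blocks of comparable mass;
combine Paley--Zygmund on each block with independence.

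Citing buys the paper brevity, and the source also gives the matching upper tail bound, which is not needed here. Your route buys an elementary argument with explicit constants. You only ever use the trivial inequality $K_{1,2}(w,t)\le H+t\|w_{>k}\|_2$ (take $w'$ to be the head and $w''$ the tail). So the nontrivial direction of Holmstedt's formula is never needed and need not be cited.

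One step should be tightened. The claim $m\asymp t^2$ fails for $t$ near $1$: at $t=1$ with infinite support no block ever closes, so $m\,s/t=0$. A uniform fix is to note that every block, including the one absorbing the leftover, has $\|w_{B_j}\|_2^2\le 3s^2/t^2$. Hence for all $t\ge 1$
\[
\sum_j \|w_{B_j}\|_2 \ \ge\ \frac{\sum_j \|w_{B_j}\|_2^2}{\max_j \|w_{B_j}\|_2} \ \ge\ \frac{s^2}{\sqrt{3}\,s/t} \ =\ \frac{ts}{\sqrt{3}}.
\]
Meanwhile the number of blocks is at most $\max(1,t^2)=t^2$, since each closed block carries squared mass at least $s^2/t^2$. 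This gives probability at least $\tfrac12 p^{t^2}$ of exceeding a constant times $ts$.

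Finally, the truncation and limiting step is unnecessary. The fourth-moment bound and Paley--Zygmund apply directly to the infinite block sums, which converge in $L^4$.
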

We will apply the theorem in a slightly different form that suits our purposes.
The statement below says that when $x \in \{-1, 1\}^n$ is sampled uniformly at random and for $w \in \IR^n$ with fixed $\ell_1$-norm, the weight of the moderate tails of the distribution of $x^T w$ are essentially lightest when $w$ is a constant vector.
For completeness, we now give the formal deduction of that claim.
\begin{theorem}\label{chernoff-converse}
    There exists a universal constant $C > 1$ such that for any nonzero $w \in \IR^n$ and every $0 < t \leq 1$, if $x$ is chosen uniformly from $x \in \{-1, 1\}^n$, it holds that
    \begin{equation*}
        \IP(x^T w > C^{-1} t \|w\|_{\ell_1}) \geq C^{-1} \exp(-C t^2 n).
    \end{equation*}
\end{theorem}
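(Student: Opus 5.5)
We will deduce \Cref{chernoff-converse} from \Cref{montgomery-smith} by applying the latter to $w$, extended by zeros to an element of $\ell_2$, with the parameter $s = t\sqrt{n}$ in place of $t$. The plan is to show that $K_{1,2}(w, t\sqrt n) \geq t\|w\|_{\ell_1}$ whenever $0 < t \leq 1$. Then \Cref{montgomery-smith} gives
\begin{equation*}
    \IP\left(x^T w > C^{-1} t\|w\|_{\ell_1}\right) \geq \IP\left(x^T w > C^{-1} K_{1,2}(w, t\sqrt n)\right) \geq C^{-1}\exp(-C t^2 n),
\end{equation*}
which is the claim with the same constant $C$.

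\textbf{Support reduction.} First we check that the infimum defining $K_{1,2}(w,s)$ may be restricted to decompositions $w = w' + w''$ in which $w'$ and $w''$ are supported on $[n]$. If $w' + w'' = w$ and $w$ vanishes off $[n]$, then off $[n]$ we have $w'_i = -w''_i$. Replacing both $w'$ and $w''$ by their restrictions to $[n]$ preserves the identity $w' + w'' = w$. It also does not increase $\|w'\|_{\ell_1}$ or $\|w''\|_{\ell_2}$. So it suffices to bound $\|w'\|_{\ell_1} + s\|w''\|_{\ell_2}$ from below for $w', w'' \in \IR^n$.

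\textbf{Lower bound on $K_{1,2}$.} By Cauchy--Schwarz, $\|w''\|_{\ell_2} \geq n^{-1/2}\|w''\|_{\ell_1}$. Hence
\begin{equation*}
    \|w'\|_{\ell_1} + t\sqrt n\,\|w''\|_{\ell_2} \geq \|w'\|_{\ell_1} + t\|w''\|_{\ell_1} \geq t\left(\|w'\|_{\ell_1} + \|w''\|_{\ell_1}\right) \geq t\|w\|_{\ell_1},
\end{equation*}
using $t \leq 1$ in the middle step and the triangle inequality in the last. Taking the infimum over all decompositions gives $K_{1,2}(w, t\sqrt n) \geq t\|w\|_{\ell_1}$.

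The random signs $x_i$ for $i > n$ in \Cref{montgomery-smith} play no role, since the sum $\sum_i x_i w_i$ involves only the first $n$ coordinates. Its distribution is therefore that of $x^T w$ with $x$ uniform on $\{-1,1\}^n$.

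There is no real obstacle here. The only points needing care are the reduction to finitely supported decompositions and the choice $s = t\sqrt n$, which matches $C t^2 n$ in the exponent. The hypothesis $t \leq 1$ is used exactly once, to absorb the $\ell_1$ term.
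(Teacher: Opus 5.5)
Your proposal is correct and follows the paper's proof essentially step for step. Like the paper, you zero-pad $w$, truncate an arbitrary decomposition $w = w' + w''$ to $[n]$, and use Cauchy--Schwarz together with $t \leq 1$ to get $K_{1,2}(w, t\sqrt{n}) \geq t\|w\|_{\ell_1}$, then invoke \Cref{montgomery-smith} with parameter $t\sqrt{n}$.
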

\begin{proof}
    We view $w \in \ell_2$ by setting $w_i = 0$ for all $i > n$, and we suppose $x_1, x_2, \dots \in \{-1, 1\}$ are independent, and uniformly random.
    Suppose $w', w'' \in \ell_2$ satisfy $w' + w'' = w$.
    Let $w_1', w_1''$ be obtained from $w', w''$ by setting all entries at indices $i > n$ to $0$.
    Then we still have $w_1' + w_1'' = w$ since $w_i = 0$ for all $i > n$.
    Furthermore, $\sqrt{n} \|w_1''\|_{\ell_2} \geq \|w_1''\|_{\ell_1}$ since $w_1''$ is supported on $[n]$ and by Cauchy-Schwarz.
    Therefore,
    \begin{equation*}
        \|w'\|_{\ell_1} + t \sqrt{n} \|w''\|_{\ell_2} \geq \|w'_1\|_{\ell_1} + t \sqrt{n} \|w''_1\|_{\ell_2} \geq \|w_1'\|_{\ell_1} + t\|w_1''\|_{\ell_1} \geq t\|w\|_{\ell_1}
    \end{equation*}
    by the triangle-inequality and $t \leq 1$.
    Therefore, $K_{1, 2}(w, t\sqrt{n}) \geq t \|w\|_{\ell_1}$.
    Hence, by \Cref{montgomery-smith},
    \begin{equation*}
        \IP\left(x^t w > C^{-1} t \|w\|_{\ell_1})\right) \geq \IP\left(\sum_{i=1}^\infty x_i w_i > C^{-1} K_{1, 2}(w, t\sqrt{n})\right) \geq C^{-1} \exp(-C t^2 n).\qedhere
    \end{equation*}
\end{proof}

\paragraph{Discrepancy.}
We will also use Spencer's famous discrepancy theorem from \cite{six-standard-deviations-spencer}.
\begin{theorem}\label{sperners-discrepancy-theorem}
    Let $n$ be sufficiently large and let $\mathcal{F}$ be a family of $m$ subsets of an $n$-element set $X$,
    where $m\geq n$.
    Then there exists a coloring
    $\chi\colon X\to\{-1,1\}$ such that
    \begin{equation*}
        \max_{S\in\mathcal{F}}
        \left|\sum_{i \in S}\chi(i)\right|
        \leq
        11\sqrt{n\ln\left(\frac{2m}{n}\right)}
        \leq 10 \sqrt{n\log_2\left(\frac{2m}{n}\right)}.
    \end{equation*}
\end{theorem}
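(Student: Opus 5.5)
The final inequality is elementary, so I will dispose of it first. Since $\ln x = \ln 2 \cdot \log_2 x$, we have $11\sqrt{n \ln(2m/n)} = 11\sqrt{\ln 2}\,\sqrt{n\log_2(2m/n)}$, and $11\sqrt{\ln 2} \approx 9.16 \leq 10$. The real content is therefore the first bound, which is Spencer's theorem. I would follow Spencer's original entropy (partial coloring) method rather than any later algorithmic version, since we only need existence.

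\textbf{Step 1: partial coloring lemma.} I would show the following. For a set $Y$ of $k$ points and $m$ sets, there is a $\chi\colon Y\to\{-1,0,1\}$ with at least $k/10$ nonzero entries and $|\chi(S\cap Y)| \leq c\sqrt{k\ln(2m/k)}$ for every $S\in\cF$.

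1. Take a uniformly random full coloring $\chi_0 \in \{-1,1\}^Y$.
2. For each set $S$, record the bucket $b_S = \operatorname{round}\bigl(\chi_0(S)/(2\lambda_S\sqrt{k})\bigr)$, where the $\lambda_S \approx \sqrt{\ln(2m/k)}$ are chosen suitably.
3. By Hoeffding's tail bounds, $b_S$ is $0$ with probability close to $1$, so its entropy satisfies $H(b_S) \lesssim e^{-\lambda_S^2/9}$ (up to constants). By subadditivity, $H\bigl((b_S)_S\bigr) \leq \sum_S H(b_S)$.
4. Tuning $\lambda$ so that this sum is at most $k/5$, some value of the bucket vector is attained by at least $2^{k-k/5}$ colorings.
5. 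By Kleitman's isoperimetric theorem, any family of $2^{(1-1/5)k}$ points of the cube contains two points $\chi_1, \chi_2$ at Hamming distance at least $k/10$.
6. Set $\chi=(\chi_1-\chi_2)/2$. Its support has size at least $k/10$, and $|\chi(S)| \leq \lambda_S\sqrt{k}$, because $\chi_1$ and $\chi_2$ fall in the same bucket.

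\textbf{Step 2: iteration.} Apply the lemma repeatedly to the still uncolored elements, with $k_0=n$ and $k_{j+1}\le \tfrac{9}{10}k_j$. The set family stays the same, with each set intersected with the uncolored part. The discrepancies add up to
\begin{equation*}
\sum_j c\sqrt{k_j\ln(2m/k_j)}.
\end{equation*}
This is a geometrically decaying sum: $\ln(2m/k_j)$ grows only linearly in $j$ while $k_j$ decays geometrically. The sum is therefore dominated by its first term, $O\bigl(\sqrt{n\ln(2m/n)}\bigr)$. Once few points remain (say fewer than a constant), color them arbitrarily, at constant cost. This is absorbed because $n$ is large and $m\ge n$ makes $\ln(2m/n)\ge \ln 2$.

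\textbf{Main obstacle.} I expect the main difficulty to be pinning down the explicit constant $11$. That requires careful entropy accounting of $H(b_S)$ in terms of $\lambda$, and a careful summation of the geometric series. I would take the constant exactly from Spencer's paper rather than rederive it. Any absolute constant suffices for all later uses in this paper, since those only require a bound of the form $O\bigl(\sqrt{n\log(2m/n)}\bigr)$.
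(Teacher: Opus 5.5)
Your proposal is correct and ends up where the paper does. The paper gives no proof of this statement: it cites Spencer's theorem for the bound with constant $11$ and asserts the conversion from $\ln$ to $\log_2$, which you verify correctly ($11\sqrt{\ln 2}\approx 9.16\le 10$). Your sketch (partial coloring via bucket entropy, Kleitman's theorem, then geometric iteration) is the standard proof of Spencer's bound and is sound up to constants, and like the paper you take the explicit constant $11$ from Spencer's paper rather than derive it.
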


\subsection{A linear universal family}

We first construct a universal family of size $n+1$ with approximation factor $c(n) = \Omega(1 / \sqrt{n})$ for absolute XOS functions.
The proof uses the existence of Hadamard bases, which we shall briefly demonstrate.

\begin{definition}
    We say $v_1, \dots, v_n \in \IR^n$ form a Hadamard basis if $v_1, \dots, v_n \in \{-1, 1\}^n$ and they form an orthogonal basis for $\IR^n$.
\end{definition}
Notice that $v_1 = 1 \in \IR^1$ forms a Hadamard basis.
Furthermore, a simple calculation shows that for a Hadamard basis $v_1, \dots, v_n \in \IR^n$, the vectors
\begin{equation*}
    w_i \coloneqq \begin{cases}
        (v_i, v_i) & 1 \leq i \leq n,\\
        (v_{i-n}, -v_{i-n}) & n+1 \leq i \leq 2n
    \end{cases}
    \in \{-1, 1\}^{2n}
\end{equation*}
form a Hadamard basis of $\IR^{2n}$.
Hence, we obtain the following standard fact.
\begin{corollary}\label{existence-of-hadamard-basis}
    $\IR^n$ admits a Hadamard basis $v_1, \dots, v_n$ with $v_1 = 1$ for any $n$ that is a power of $2$.
\end{corollary}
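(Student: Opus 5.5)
The plan is to argue by induction on $k$, where $n = 2^k$, using the doubling construction displayed just before the statement. This construction is Sylvester's recursive construction of Hadamard matrices.

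For the base case $k=0$ we have $n=1$. The single vector $v_1 = 1 \in \IR^1$ lies in $\{-1,1\}^1$ and spans $\IR^1$, so it is a Hadamard basis with $v_1 = 1$.

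For the inductive step, suppose $v_1,\dots,v_n \in \{-1,1\}^n$ is an orthogonal basis of $\IR^n$ with $v_1 = 1$. Define $w_1,\dots,w_{2n}\in\{-1,1\}^{2n}$ as in the display above. Each $w_i$ clearly has entries in $\{-1,1\}$, and $w_1 = (v_1,v_1) = (1,1) = 1 \in \IR^{2n}$. For orthogonality we compute inner products block by block:
\begin{itemize}
\item[] $\langle (v_i,\pm v_i),(v_j,\pm v_j)\rangle = 2\langle v_i,v_j\rangle$ when the signs agree, which vanishes for $i\neq j$;
\item[] $\langle (v_i,v_i),(v_j,-v_j)\rangle = \langle v_i,v_j\rangle - \langle v_i,v_j\rangle = 0$ for all $i,j$, including $i=j$.
\end{itemize}
So the $2n$ vectors are pairwise orthogonal. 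Each has squared norm $2n \neq 0$, so they are nonzero. Pairwise orthogonal nonzero vectors are linearly independent, and $2n$ of them in $\IR^{2n}$ form a basis.

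This completes the induction and gives a Hadamard basis with first vector $1$ for every $n = 2^k$. I do not expect any real obstacle. The only point needing care is the mixed-sign case with $i=j$: the vectors $(v_i,v_i)$ and $(v_i,-v_i)$ are orthogonal because the two block contributions cancel exactly, not because $\langle v_i,v_i\rangle = 0$.
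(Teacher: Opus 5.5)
Your proof is correct and takes essentially the same approach as the paper: induction from the base case $v_1 = 1 \in \IR^1$ via the doubling step $(v_i, v_i), (v_i, -v_i)$. You also carry out the ``simple calculation'' the paper leaves implicit, namely the block-wise orthogonality check (including the mixed-sign case $i = j$) and the observation that $w_1 = (v_1, v_1) = 1$.
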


\begin{theorem}\label{symmetric-xos-approximation-by-hadamard-basis}
    Let $V$ be a set of size $n$ and suppose $n$ is a power of $2$.
    Then there exists a family $\cF \subset 2^V$ of size $n + 1$ with the following property:
    For any absolute XOS $f : 2^V \to \IR$,
    \begin{equation}\label{eq:symmetric-xos-approximation-by-hadamard-basis}
        \max_{S \in \cF} f(S) \geq \frac{1}{3\sqrt{n}}\max_{A \subset V} f(A).
    \end{equation}
\end{theorem}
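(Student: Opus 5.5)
The family will be $\cF = \{\emptyset\} \cup \{S_1,\dots,S_n\}$, where the $S_k$ come from the Hadamard basis of Corollary~\ref{existence-of-hadamard-basis}. For $k \geq 2$ we let $S_k = \{j : v_k(j) = 1\}$, and we take $S_1 = V$ since $v_1 = 1$. This gives $n+1$ sets.

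It suffices to treat a single term $g(S) = |w(S) + c|$, where $w = w_i$ and $c = c_i$ are chosen so that the term attains $\max_A f(A)$ at the maximizer $A$. Since $f \geq g$ pointwise, it is enough to show $\max_{S \in \cF} g(S) \geq \frac{1}{3\sqrt n} g(A)$. Write $M = g(A) = |w(A) + c|$.

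The key identity is $1_{S_k} = (1 + v_k)/2$. Hence $w(S_k) = \frac12\big(w(V) + \langle v_k, w\rangle\big)$ for $k \geq 2$, and $w(S_1) = w(V)$. From the sets $\emptyset$ and $V$ we learn $|c|$ and $|w(V) + c|$. For $k \geq 2$, using the triangle inequality,
\[
|\langle v_k, w\rangle| = |2(w(S_k)+c) - (w(V)+c) - c| \leq 2g(S_k) + g(V) + g(\emptyset) \leq 4 \max_{S \in \cF} g(S).
\]
The same bound holds for $k=1$, since $\langle v_1,w\rangle = w(V)$ and $|w(V)| \leq g(V) + g(\emptyset)$. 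So every Hadamard coefficient of $w$ is controlled by $\mu = \max_{S\in\cF} g(S)$.

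By orthogonality, $\|w\|_2^2 = \frac1n \sum_k \langle v_k, w\rangle^2 \leq 16\mu^2$, so $\|w\|_2 \leq 4\mu$. Then
\[
M \leq |c| + |w(A)| \leq \mu + \sqrt{|A|}\,\|w\|_2 \leq \mu + 4\sqrt n\,\mu,
\]
which gives the factor $\frac{1}{5\sqrt n}$ roughly.

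The main obstacle is sharpening the constant to $3$. For this I would bound $w(A)$ through $1_A = \frac12(1 + u)$ with $u \in \{-1,1\}^n$, which gives $w(A) = \frac12 w(V) + \frac12 \langle u, w\rangle$. Here $|\langle u,w\rangle| \leq \sqrt n\,\|w\|_2$, and the norm bound uses the quantities $|2(w(S_k)+c) - (w(V)+2c)|$ more carefully: the term $w(V) + 2c$ is controlled by $g(V) + g(\emptyset)$, so $|\langle v_k,w\rangle| \leq 2\mu + 2\mu$, with a tighter treatment for $k=1$. Combining gives $|w(A) + c| \leq \frac12|w(V) + 2c| + \frac12\sqrt n\cdot(\text{about } 2\text{--}4)\mu$, which lands at or below $3\sqrt n\,\mu$ for $n \geq 1$. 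I would adjust the bookkeeping until the constant is $3$; the structure of the argument is unaffected.
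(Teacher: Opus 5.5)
Your proof is correct. The step you flag as the main obstacle is already settled by the bounds you wrote down. The uniform estimate $|\langle v_k,w\rangle|\le 4\mu$ for all $k$ gives $\|w\|_2\le 4\mu$ by Parseval, with no special care needed for $k=1$. Together with $|w(V)+2c|\le g(V)+g(\emptyset)\le 2\mu$, this yields
\[
|w(A)+c|\le \tfrac12|w(V)+2c|+\tfrac12\sqrt n\,\|w\|_2\le (1+2\sqrt n)\,\mu\le 3\sqrt n\,\mu .
\]
So state this directly rather than promising to adjust the bookkeeping. Your first pass lost a factor of $2$ only because $1_A$ is not centered, and the substitution $1_A=\frac12(1+u)$ repairs exactly that.

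Compared with the paper, you use the same family but expand the other side of the pairing. The paper writes the sign vector as $1_A-1_{V\setminus A}=\sum_j\alpha_j v_j$. It then expresses $w(A)+c$ as an explicit affine combination of the queried values $w(S_j)+c$ and $\frac{w(V)}{2}+c=\frac12\bigl((w(V)+c)+c\bigr)$, and bounds $\sum_j|\alpha_j|\le\sqrt n$ by Cauchy--Schwarz. You instead expand $w$, control each of its Hadamard coefficients by the queried values, and apply Parseval and Cauchy--Schwarz in the original coordinates.

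The two arguments are dual forms of one estimate. The paper uses H\"older with $\ell_1$ on the coefficients of $A$ and $\ell_\infty$ on the values; you use Cauchy--Schwarz after Parseval. Both give the identical bound $(1+2\sqrt n)\mu$. The paper's version is a direct interpolation identity that never says anything about $w$ itself. Yours isolates the intermediate fact $\|w\|_2\le 4\max_{S\in\cF}|w(S)+c|$. That makes it transparent that the $\sqrt n$ loss is exactly the gap between $\|w\|_1$, which governs $\max_A|w(A)+c|$, and $\|w\|_2$, which this family controls.
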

\begin{proof}
    Since $n$ is a power of $2$, \Cref{existence-of-hadamard-basis} shows that $\IR^n$ admits a Hadamard basis $v_1, \dots, v_n$ with $v_1 = 1$.
    Define $S_j$ so that $v_j = 1_{S_j} - 1_{V \setminus S_j}$ for $1 \leq j \leq n$, so $S_1 = V$.
    Further, set $S_0 \coloneqq \emptyset$.
    We claim that $\cF = \{S_0, S_1, S_2, \dots, S_n\}$ works.
    Clearly, $\cF$ has size $n+1$.

    Let $f$ be absolute XOS, i.e., $f(A) = \max_{1 \leq i \leq m} |w_i(A) + c_i|$ for a finite collection of $(w_i, c_i) \in \IR^V \times \IR$.
    Let us consider any $A \subset V$.
    We need to show that
    \begin{equation*}
        f(A) \leq 3\sqrt{n} \max_{0 \leq j \leq n} f(S_j).
    \end{equation*}
    Pick $i^\ast \in \argmax_{1 \leq i \leq m} |w_i(A) + c_i|$.
    Since $v_1, \dots, v_n$ form an orthogonal basis for $\IR^n$, we can write $v \coloneqq 1_A - 1_{V \setminus A}$ as
    \begin{equation*}
        v = \sum_{j=1}^n \frac{\langle v, v_j\rangle}{\|v_j\|^2} v_j.
    \end{equation*}
    Therefore,
    \begin{align*}
        f(A)
        &= |w_{i^\ast}(A) + c_{i^\ast}|\\
        &= \left|\frac{w_{i^\ast}^T v + w_{i^\ast}(V)}{2} + c_{i^\ast}\right|\\
        &= \left|\sum_{j=1}^n \frac{\langle v, v_j\rangle}{\|v_j\|^2} \left(\frac{w_{i^\ast}^T v_j + w_{i^\ast}(V)}{2} + c_{i^\ast}\right)
        - \left(\sum_{j=1}^n \frac{\langle v, v_j\rangle}{\|v_j\|^2} - 1\right) \left(\frac{w_{i^\ast}(V)}{2} + c_{i^\ast}\right)\right|\\
        &\leq \sum_{j=1}^n \frac{|\langle v, v_j\rangle|}{\|v_j\|^2} f(S_j) + \left(1 + \sum_{j=1}^n \frac{|\langle v, v_j\rangle|}{\|v_j\|^2}\right) \frac{f(\emptyset) + f(V)}{2}\\
        &\leq 3\sqrt{n} \max_{0 \leq j \leq n} f(S_j),
    \end{align*}
    where we used that $\emptyset = S_0, V = S_1$, and
    \begin{equation*}
        \sum_{j=1}^n \frac{|\langle v, v_j\rangle|}{\|v_j\|^2} = \sum_{j=1}^n \left|\left\langle \frac{v}{\|v\|}, \frac{v_j}{\|v_j\|}\right\rangle\right| \leq \sqrt{\left(\sum_{j=1}^n \left|\left\langle \frac{v}{\|v\|}, \frac{v_j}{\|v_j\|}\right\rangle\right|^2\right) \left(\sum_{j=1}^n 1^2 \right)} = \sqrt{n}.\qedhere
    \end{equation*}
\end{proof}

\subsection{A polynomial universal family}

We next construct a universal family of size roughly $n^D$ (for $D > 1$) with approximation factor $c(n) = \Omega(\sqrt{(D-1)\log n / n})$ for absolute XOS functions.
The construction is randomized, and we use a net-argument to show that it satisfies the required properties with positive probability.
Plugging in, say, $D = 2$ below yields \Cref{thm:XOS-upper}.

\begin{theorem}\label{non-adaptive-poly-size-approximation-for-XOS}
    Let $D > 1$ be a fixed real number.
    For any sufficiently large $n$ and any set $V$ of size $n$, there exists a family $\cF \subset 2^V$ of size at most $n^D$ with the following property:
    For any absolute XOS $f : 2^V \to \IR$,
    \begin{equation}\label{target-inequality-non-adaptive-poly-size-approximation-for-XOS}
        \max_{S \in \cF} f(S) \geq \sqrt{\frac{D-1}{8C^3}}\sqrt{\frac{\log n}{n}} \max_{A \subset V} f(A),
    \end{equation}
    where $C > 1$ is the constant from \Cref{chernoff-converse}.
\end{theorem}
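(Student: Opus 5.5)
We take $\cF$ to consist of $\emptyset$, $V$, and $N \approx n^D - 2$ independent uniformly random subsets $R_1,\dots,R_N$ of $V$. Identifying sets with $\pm1$ vectors $x = 1_S - 1_{V\setminus S}$, we have $w(S) + c = \frac{1}{2}(w^T x) + \frac{1}{2}w(V) + c$. So each linear piece $|w(S)+c|$ is the absolute value of an affine function $g(x) = \langle u, x\rangle + b$ in the sign vector, with $u = w/2$ and $b = w(V)/2 + c$.

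Fix an absolute XOS $f$ and a maximizer $A$. Let $(w,c)$ be the piece attaining $f(A)$, and write $M = \max_{S\in\cF} f(S)$. It suffices to show $|g(x_A)| \le \sqrt{8C^3/(D-1)}\sqrt{n/\log n}\, M$ using only this single piece, since $f \ge |g|$ on $\cF$. From $S=\emptyset$ and $S=V$ we get $|b - \|u\|_1'|\le M$-type bounds; more precisely $|g(\pm \mathbf 1)| \le M$, so $|b| \le M$ and $|\langle u,\mathbf 1\rangle| \le M$. Also $|g(x_A)| \le \|u\|_1 + |b| \le \|u\|_1 + M$. Hence it suffices to prove that for every $u$ (with high enough probability over $\cF$, simultaneously for all $u$),
\[
\max_{k\le N} |\langle u, x_{R_k}\rangle| \ge 2 C^{-1} t\,\|u\|_1 \quad\text{with } t = \sqrt{(D-1)\ln n/(2Cn)} \text{ or similar.}
\]
Then on some $R_k$ we get $|g(x_{R_k})| \ge |\langle u,x_{R_k}\rangle| - |b| \ge C^{-1}t\|u\|_1 \cdot 2 - M$, which combined with $M\ge |g(x_{R_k})|$ gives $\|u\|_1 \lesssim M/(C^{-1}t)$. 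This bounds $f(A) \lesssim M(1 + C/t) = O(M\sqrt{n/\log n})$, and with some care in the constants this matches the stated $\sqrt{(D-1)/(8C^3)}$.

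For a fixed $u$, Theorem~\ref{chernoff-converse} says each $R_k$ succeeds with probability at least $C^{-1}e^{-Ct^2 n} = C^{-1}n^{-(D-1)/2}$ for the chosen $t$, using $t^2 = (D-1)\ln n/(2Cn) \le 1$. The failure probability over all $N$ samples is therefore at most $\exp(-N C^{-1} n^{-(D-1)/2}) \approx \exp(-n^{(D+1)/2}/C)$, which is superexponentially small, with exponent $n^{(1+D)/2} \gg n\log n$ since $D > 1$.

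The main obstacle is making this hold for all $u$ at once, which needs a net argument. Normalize $\|u\|_1 = 1$ and take an $\epsilon$-net $\cN$ of the $\ell_1$ unit sphere in $\R^n$ with $\epsilon = C^{-1}t$ (say). Its size is at most $(3/\epsilon)^n = \exp(O(n\log n))$. The union bound over $\cN$ succeeds because $n\log n \ll n^{(D+1)/2}$ for large $n$. To pass to arbitrary $u$, pick $u' \in \cN$ with $\|u-u'\|_1 \le \epsilon$. Since $|\langle u-u', x\rangle| \le \|u-u'\|_1$ for $x\in\{-1,1\}^n$, we get $|\langle u,x_{R_k}\rangle| \ge 2C^{-1}t - \epsilon$. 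This is where we use the threshold $2C^{-1}t$ for net points and settle for $C^{-1}t$ for general $u$; apply Theorem~\ref{chernoff-converse} with $2t$ in place of $t$, adjusting the constant in $t$ accordingly. This loss in constants is presumably what produces the $8C^3$ in the statement.
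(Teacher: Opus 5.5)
Your approach is correct and takes a genuinely different route from the paper. The paper uses only random sets and builds the net in $\R^{n+1}$ over pairs $(w,c)$ normalized by $p(w,c)=\|w\|_1+|w(V)+2c|=1$. It then splits into two cases. If the offset $|w(V)+2c|$ is at least $4\epsilon$, symmetry of $x^Tw$ gives success probability at least $1/2$. Otherwise $\|w\|_1\ge 1-4\epsilon$, and one-sided anticoncentration is applied on the side agreeing with the sign of the offset, so the offset costs nothing. You instead spend the two deterministic sets $\emptyset,V$ to pin the offset ($|b|\le M$). This reduces everything to pure linear forms on the $\ell_1$ sphere of $\R^n$, with a single two-sided anticoncentration estimate and no case analysis; it is the same device the paper uses in its Hadamard construction. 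The price is a factor of $2$ at the subtraction step: you only get $\alpha\|u\|_1\le M+|b|\le 2M$, hence $f(A)\le(2/\alpha+1)M$, where $\alpha$ is the threshold guaranteed for all normalized $u$.

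Because of that loss, your bookkeeping as written does not reach the stated constant. Your first display asks for $2C^{-1}t\|u\|_1$ for all $u$, but the net (resolution $C^{-1}t$) only delivers $C^{-1}t\|u\|_1$. With resolution equal to $\alpha$ and net threshold $2\alpha$, the union bound forces $4C^3\alpha^2 n<(D-1)\ln n$. The final ratio is then about $\alpha/2<\tfrac14\sqrt{(D-1)/C^3}\sqrt{\ln n/n}$, at least a factor $\sqrt2$ below $\sqrt{(D-1)/(8C^3)}\sqrt{\ln n/n}$.

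The fix is to make the net much finer than the threshold, which is free because $|\cN|$ remains $\exp(O(n\log n))$ for any polynomially small resolution. For example, take $\alpha=0.8\sqrt{(D-1)/C^3}\sqrt{\ln n/n}$ and resolution $\alpha/10$:
\begin{itemize}
\item Net points then need $|\langle u',x\rangle|\ge 1.1\alpha$.
\item Anticoncentration gives per-sample success probability at least $C^{-1}n^{-0.7744(D-1)}$.
\item So all $N\approx n^D$ samples fail with probability at most $\exp(-n^{1+\Omega(1)})$, which beats the net size.
\item Finally, $2/\alpha+1\le 2.5\sqrt{C^3/(D-1)}\sqrt{n/\ln n}+1\le\sqrt{8C^3/(D-1)}\sqrt{n/\ln n}$ for large $n$.
\end{itemize}
With this tuning your argument proves the theorem with the stated constant.
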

\begin{proof}
    Set $N = \lfloor n^D \rfloor$, and sample $S_1, \dots, S_N \subset V$ independently and uniformly at random.
    We will show that $\cF = \{S_1, \dots, S_N\}$ satisfies the required properties with positive probability.
    Clearly, $|\cF| \leq N \leq n^D$.

    It remains to prove that \Cref{target-inequality-non-adaptive-poly-size-approximation-for-XOS} holds for some outcome of $\cF$.
    Let us start with an overview of the proof.
    We will need to verify \Cref{target-inequality-non-adaptive-poly-size-approximation-for-XOS} only for the case that $f(A) = |w(A) + c|$ for $(w, c) \in \IR^V \times \IR$ is defined by the maximum of a single term.
    To achieve this, we use a net argument: We cast an $\epsilon$-net in $\IR^V \times \IR \cong \IR^{n+1}$ to approximate $(w, c)$, and proceed by a union bound.
    
    We first compute the maximum of the function $f$ as
    \begin{equation}\label{maximum-of-affine-linear-function}
        \max_{A \subset V} |w(A) + c| = \max_{x \in \{-1, 1\}^V} \left|\frac{w^T x + w(V)}{2} + c\right| = \frac{1}{2} \left(\|w\|_{\ell_1} + |w(V) + 2c|\right).
    \end{equation}
    Let us from now on use the abbreviation $p(w, c) \coloneqq \|w\|_{\ell_1} + |w(V) + 2c|$.
    We will construct the net in
    \begin{equation}\label{definition-of-S}
        S = \{(w, c) \in \IR^V \times \IR : p(w, c) = 1\}.
    \end{equation}
    Note that $(w, c) \in S$ implies $|2c| \leq 1 - \|w\|_{\ell_1} + |w(V)| \leq 1$, hence $\|(w, c)\|_{\ell_1} \leq \frac{3}{2} < 2$.
    This means $S \subset B_{\ell_1}(0, 2)$.
    We set
    \begin{equation*}
        \epsilon \coloneqq \sqrt{\frac{D - 1}{32 C^3}} \sqrt{\frac{\log n}{n}} > 0.
    \end{equation*}
    Let $\cN \subset S$ be a collection of $\epsilon$-separated points in $S$, i.e., any two distinct points in $\cN$ have $\ell_1$-distance at least $\epsilon$.
    Then the open balls $B_{\ell_1}((w, c), \epsilon / 2)$ are disjoint and contained in $B_{\ell_1}(S, 1) \subset B_{\ell_1}(0, 3)$ for $(w, c) \in S$.
    Denoting the Lebesgue measure in $\IR^V \times \IR$ by $\lambda$ and setting $v_{n+1} = \lambda(B_{\ell_1}(0, 1)) > 0$, we get
    \begin{equation*}
        |\cN| \left(\frac{\epsilon}{2}\right)^{n+1} v_{n+1} = \sum_{(w, c) \in \cN} \lambda\left(B_{\ell_1}\left((w, c), \frac{\epsilon}{2}\right)\right) \leq \lambda(B_{\ell_1}(0, 3)) = 3^{n+1} v_{n+1}.
    \end{equation*}
    Hence, $|\cN| \leq (6 / \epsilon)^{n+1} \leq \exp(n^{1 + o(1)})$.
    Let us from now also assume that $\cN \subset S$ is an $\epsilon$-net, i.e., it is a maximal $\epsilon$-separated subset of $S$.
    This set can, for example, be built by greedily adding any point from $S \setminus B_{\ell_1}(\cN, \epsilon)$ to $\cN$ as long as possible.
    
    Using the tight anticoncentration result \Cref{chernoff-converse}, we next prove:
    \begin{claim}
        For $n$ large enough, it holds for all $(w, c) \in S$ that
        \begin{equation*}
            \IP\left(\max_{1 \leq j \leq N} |w(S_j) + c| < 2 \epsilon \right) \leq \exp\left(-N C^{-1} n^{-\frac{2}{3}(D-1)}\right).
        \end{equation*}
    \end{claim}
    \begin{proof}
        Since $S_1, \dots, S_N$ are independent, identically distributed with $x_1 = 1_{S_1} - 1_{V \setminus S_1}$, we can rewrite the left-hand side as $\IP\left(\left|x_1^T w + w(V) + 2c\right| < 4\epsilon\right)^N$.
        If $|w(V) + 2c| \geq 4 \epsilon$, then
        \begin{equation*}
            \IP\left(\left|x_1^T w + w(V) + 2c\right| \geq 4\epsilon\right) \geq \frac{1}{2}
        \end{equation*}
        since the distribution of $x_1^T w$ is symmetric around $0$.
        Otherwise, we get $\|w\|_{\ell_1} \geq 1 - 4\epsilon$ since $(w, c) \in S$.
        Then we write
        \begin{equation*}
            \IP\left(\left|x_1^T w + w(V) + 2c\right| \geq 4\epsilon\right)
            \geq \min\left\{\IP\left(x_1^T w \geq 4\epsilon\right), \IP\left(x_1^T w \leq -4\epsilon\right)\right\}
            = \IP\left(x_1^T w \geq 4\epsilon\right),
        \end{equation*}
        and lower-bound the right-hand side using \Cref{chernoff-converse}.
        Suppose $n$ is large enough, so that $\epsilon < \frac{1}{100}$, and $t \leq 1$, where
        \begin{equation*}
            t \coloneqq \frac{4C \epsilon}{\|w\|_{\ell_1}} \leq \frac{1}{1 - 4\epsilon} \sqrt{\frac{D - 1}{2 C}} \sqrt{\frac{\log n}{n}}.
        \end{equation*}
        Then applying \Cref{chernoff-converse}, we get
        \begin{equation*}
            \IP\left(x_1^T w \geq 4 \epsilon\right)
            = \IP\left(x_1^T w
            \geq C^{-1} t \|w\|_{\ell_1}\right) \geq C^{-1} e^{-C t^2 n}
            \geq C^{-1} n^{-(D-1) / (2(1 - 4\epsilon)^2)}
            \geq C^{-1} n^{-\frac{2}{3}(D-1)}.
        \end{equation*}
        In both cases, we conclude that 
        \begin{equation*}
            \IP\left(\left|x_1^T w + w(V) + 2c\right| \geq 4\epsilon\right) \geq C^{-1} n^{-\frac{2}{3}(D-1)}.
        \end{equation*}
        for large enough $n$. Then
        \begin{equation*}
            \IP\left(\max_{1 \leq j \leq N} |w(S_j) + c| < 2 \epsilon \right) \leq \left(1 - C^{-1} n^{-\frac{2}{3}(D-1)}\right)^N \leq \exp\left(-N C^{-1} n^{-\frac{2}{3}(D-1)}\right).\qedhere
        \end{equation*}
    \end{proof}
    Let $\cE$ be the event that $\max_{1 \leq j \leq N} |w(S_j) + c| \geq 2\epsilon$ holds for all $(w, c) \in \cN$.
    Then $\cE$ holds with probability
    \begin{equation*}
        \IP(\cE) \geq 1 - |\cN| \exp\left(-N C^{-1} n^{-\frac{2}{3}(D-1)}\right) \geq 1 - \exp\left(n^{1 + o(1)} - N C^{-1} n^{-\frac{2}{3}(D-1)}\right) > 0
    \end{equation*}
    for sufficiently large $n$ since $N C^{-1} n^{-\frac{2}{3}(D-1)} \geq n^{1 + \Omega(1)}$.
    Then we can choose an outcome for $\cF = \{S_1, \dots, S_N\}$ such that $\cE$ holds.
    We will prove that this $\cF$ satisfies \Cref{target-inequality-non-adaptive-poly-size-approximation-for-XOS}.
    
    Let $f : 2^V \to \IR$ be absolute XOS, i.e., $f(A) = \max_i |w_i(A) + c_i|$ for a finite collection of $(w_i, c_i) \in \IR^V \times \IR$.
    Then by \Cref{maximum-of-affine-linear-function},
    \begin{equation*}
        \max_{A \subset V} f(A) = \max_i \max_{A \subset V} |w_i(A) + c_i| = \frac{1}{2} \max_i p(w_i, c_i).
    \end{equation*}
    Since the conclusion of the theorem is otherwise trivial, we may assume that the left-hand side is strictly positive.
    Then $i^\ast$ maximizing the right-hand side satisfies $(w_{i^\ast}, c_{i^\ast}) \neq 0$.
    Since $\cN \subset S$ is an $\epsilon$-net, there exists a $(w, c) \in \cN$ with $\left\|(w, c) - \frac{(w_{i^\ast}, c_{i^\ast})}{p(w_{i^\ast}, c_{i^\ast})}\right\|_{\ell_1} < \epsilon$.
    Applying $\cE$ with $(w, c) \in \cN$, we get that there exists an $S_j \in \cF$ such that $|w(S_j) + c| \geq 2\epsilon$.
    So,
    \begin{align*}
        f(S_j)
        &\geq |w_{i^\ast}(S_j) + c_{i^\ast}|\\
        &\geq p(w_{i^\ast}, c_{i^\ast}) \left(|w(S_j) + c| - \left\|(w, c) - \frac{(w_{i^\ast}, c_{i^\ast})}{p(w_{i^\ast}, c_{i^\ast})}\right\|_{\ell_1}\right)\\
        &\geq p(w_{i^\ast}, c_{i^\ast}) (2\epsilon - \epsilon)\\
        &= 2\epsilon \max_{A \subset V} f(A).
    \end{align*}
    This shows
    \begin{equation*}
        \max_{S \in \cF} f(S) \geq 2\epsilon \max_{A \subset V} f(A) = \sqrt{\frac{D-1}{8C^3}}\sqrt{\frac{\log n}{n}} \max_{A \subset V} f(A).\qedhere
    \end{equation*}
\end{proof}

\subsection{A matching lower bound for our universal families in all regimes}

The following negative result directly yields \Cref{thm:XOS-lower}.
Substituting $D = 1$ (resp. $D > 1$) shows that the approximation factor in \Cref{symmetric-xos-approximation-by-hadamard-basis} (resp. \Cref{non-adaptive-poly-size-approximation-for-XOS} with parameter $D > 1$) is optimal up to a constant factor.
It is a direct consequence of Spencer's discrepancy theorem (\Cref{sperners-discrepancy-theorem}) from \cite{six-standard-deviations-spencer}.
We include the proof for the convenience of the reader.
\begin{proposition}\label{xos-quantitative-lower-bound}
    Let $n$ be large enough, let $V$ be a finite set of cardinality $n$.
    Then for any real numbers $K, D \geq 1$ and $\cF \subset 2^V$ of size $n \leq |\cF| \leq K n^D$, there exists a nonzero absolute XOS $f : 2^V \to \IR$ with
    \begin{equation*}
        \max_{S \in \cF} f(S) \leq 20 \sqrt{\frac{\log_2(2 K) + (D-1)\log_2(n)}{n}} \max_{A \subset V} f(A).
    \end{equation*}
\end{proposition}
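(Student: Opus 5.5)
Apply Spencer's theorem (\Cref{sperners-discrepancy-theorem}) to $\cF \cup \{V\}$, then read off an absolute XOS function of the simplest form: $f(A) = |\chi(A)|$, a single term with $c = 0$. This $f$ is nonnegative, and its maximum over all sets is large: taking $A = \chi^{-1}(1)$ or $A = \chi^{-1}(-1)$, one of them has $|\chi(A)| \geq n/2$. It is also nonzero. On every set of the family, $f$ equals the discrepancy of that set, which Spencer's theorem makes small.

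The steps are as follows.
\begin{enumerate}
\item Apply \Cref{sperners-discrepancy-theorem} to the family $\cF$, which has $m = |\cF| \geq n$ sets with $m \leq K n^D$. This yields $\chi : V \to \{-1,1\}$ with
\[
\max_{S\in\cF}|\chi(S)| \leq 10\sqrt{n\log_2(2m/n)} \leq 10\sqrt{n\bigl(\log_2(2K) + (D-1)\log_2 n\bigr)}.
\]
\item Set $f(A) = |\chi(A)| = |\sum_{j\in A}\chi(j)|$. This is absolute XOS with a single index, $w = \chi$ and $c = 0$. By \Cref{maximum-of-affine-linear-function}, $\max_A f(A) = \frac12\bigl(\|\chi\|_{\ell_1} + |\chi(V)|\bigr) \geq n/2$, and in particular $f$ is nonzero.
\item Dividing gives
\[
\max_{S\in\cF} f(S) \leq \frac{10\sqrt{n(\cdots)}}{n/2}\max_A f(A) = 20\sqrt{\frac{\log_2(2K)+(D-1)\log_2 n}{n}}\,\max_A f(A),
\]
which is exactly the claimed constant.
\end{enumerate}

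The only points needing care are bookkeeping. The hypothesis $m \geq n$ of Spencer's theorem is assumed in the statement. The bound $\log_2(2m/n) \leq \log_2(2Kn^{D-1})$ is immediate. There is no real obstacle beyond citing Spencer's theorem correctly, and "$n$ large enough" is inherited from it.
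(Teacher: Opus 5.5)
Your proof is correct and is essentially the paper's argument: apply Spencer's theorem to $\cF$ to get $\chi$, take the single-term absolute XOS function $f(A) = |\chi(A)|$, note that one color class gives $\max_A f(A) \geq n/2$, and divide. The only blemish is the opening mention of $\cF \cup \{V\}$, which is unnecessary and, taken literally, could push $m$ to $Kn^D + 1$ and spoil the exact constant; your Step 1 correctly applies the theorem to $\cF$ itself.
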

\begin{proof}
    By \Cref{sperners-discrepancy-theorem}, there exists a coloring $\chi : V \to \{-1, 1\}$ such that
    \begin{equation*}
        \max_{S\in\mathcal{F}}
        \left|\sum_{i \in S}\chi(i)\right|
        \leq
        10 \sqrt{n\log_2\left(\frac{2|\cF|}{n}\right)}.
    \end{equation*}
    Then set $f : 2^V \to \IR$, $f(A) \coloneqq \left|\sum_{i \in A} \chi(i) \right|$.
    Then $f$ is nonzero absolute XOS with
    \begin{equation*}
        \max_{S \in \cF} f(S) \leq 
        10 \sqrt{n\log_2\left(\frac{2|\cF|}{n}\right)}.
    \end{equation*}
    On the other hand, we can bound
    \begin{align*}
        \max_{A \subset V} f(A)
        &= \max\{f(\{i : \chi(i) = 1\}), f(\{i : \chi(i) = -1\})\}\\
        &\geq \frac{n}{2} \\
        &\geq \frac{1}{20} \sqrt{\frac{n}{\log_2\left(\frac{2|\cF|}{n}\right)}} \max_{S \in \cF} f(S)\\
        &\geq \frac{1}{20} \sqrt{\frac{n}{\log_2(2 K) + (D-1)\log_2(n)}} \max_{S \in \cF} f(S).\qedhere
    \end{align*}
\end{proof}

\section{Conclusions}

The main remaining question is whether there is a polynomial-size family $\cF_n$ such that $$\max_{S \in \cF_n} f(S) \geq c \cdot \max_{S \subset V} f(S)$$ for a constant $c>0$ and all nonnegative submodular functions $f$. We know that the support of a pairwise-independent or even a $k$-wise independent distribution is not sufficient for this. It also remains to be resolved whether pairwise-independent distributions approximate the maximum of a submodular function within a factor of $O(\sqrt{\log n})$ or whether a stronger lower bound exists.

\paragraph{AI use statement:}
ChatGPT 5.6 Sol Pro was used to find the inductive proofs of Theorems~\ref{thm:submod-log} and \ref{thm:submod-loglog}, as well as the lower bound examples for Theorem~\ref{thm:pairwise-indep} and Theorem~\ref{thm:k-wise}.
We verified and digested the proofs, and wrote our own exposition of the results.

\bibliographystyle{amsplain}
\bibliography{refs}

@article {rademacher-anticoncentration,
    AUTHOR = {Montgomery-Smith, S. J.},
     TITLE = {The distribution of {R}ademacher sums},
   JOURNAL = {Proc. Amer. Math. Soc.},
  FJOURNAL = {Proceedings of the American Mathematical Society},
    VOLUME = {109},
      YEAR = {1990},
    NUMBER = {2},
     PAGES = {517--522},
      ISSN = {0002-9939,1088-6826},
   MRCLASS = {60C05 (46M35 60E15 60G50)},
  MRNUMBER = {1013975},
MRREVIEWER = {I.\ Ya.\ Novikov},
       DOI = {10.2307/2048015},
       URL = {https://doi.org/10.2307/2048015},
}

@article {six-standard-deviations-spencer,
    AUTHOR = {Spencer, Joel},
     TITLE = {Six standard deviations suffice},
   JOURNAL = {Trans. Amer. Math. Soc.},
  FJOURNAL = {Transactions of the American Mathematical Society},
    VOLUME = {289},
      YEAR = {1985},
    NUMBER = {2},
     PAGES = {679--706},
      ISSN = {0002-9947,1088-6850},
   MRCLASS = {05A05},
  MRNUMBER = {784009},
       DOI = {10.2307/2000258},
       URL = {https://doi.org/10.2307/2000258},
}

@book {schrijver-comb-opt,
    AUTHOR = {Schrijver, Alexander},
     TITLE = {Combinatorial optimization. {P}olyhedra and efficiency. {V}ol.
              {B}},
    SERIES = {Algorithms and Combinatorics},
    VOLUME = {24,B},
      NOTE = {Matroids, trees, stable sets,
              Chapters 39--69},
 PUBLISHER = {Springer-Verlag, Berlin},
      YEAR = {2003},
     PAGES = {i--xxxiv and 649--1217},
      ISBN = {3-540-44389-4},
   MRCLASS = {90-02 (05-02 52B55 68Q25 68R10 90C27 90C35 90C57)},
  MRNUMBER = {1956925},
MRREVIEWER = {Alexander\ I.\ Barvinok},
}

@article {NWF78,
    AUTHOR = {Nemhauser, G. L. and Wolsey, L. A. and Fisher, M. L.},
     TITLE = {An analysis of approximations for maximizing submodular set
              functions. {I}},
   JOURNAL = {Math. Programming},
  FJOURNAL = {Mathematical Programming},
    VOLUME = {14},
      YEAR = {1978},
    NUMBER = {3},
     PAGES = {265--294},
      ISSN = {0025-5610},
   MRCLASS = {90C99 (65K05)},
  MRNUMBER = {MR0503866 (58 \#20492)},
MRREVIEWER = {Jacques Dubois},
}

@article{NW78,
  title={Best algorithms for approximating the maximum of a submodular set function},
  author={Nemhauser, George L and Wolsey, Laurence A},
  journal={Mathematics of operations research},
  volume={3},
  number={3},
  pages={177--188},
  year={1978},
  publisher={INFORMS}
}

@article{Schrijver00,
	Author = {Schrijver, A.},
	Journal = {Journal of Combinatorial Theory, Series B},
	Number = {2},
	Pages = {346--355},
	Publisher = {Elsevier},
	Title = {A combinatorial algorithm minimizing submodular
		functions in strongly polynomial time},
	Volume = {80},
	Year = {2000}
}

@article{IwataFF01,
  title={A combinatorial strongly polynomial algorithm for minimizing submodular functions},
  author={Iwata, Satoru and Fleischer, Lisa and Fujishige, Satoru},
  journal={Journal of the ACM (JACM)},
  volume={48},
  number={4},
  pages={761--777},
  year={2001},
  publisher={ACM New York, NY, USA}
}

@conference{FeigeMV07,
  title={{Maximizing non-monotone submodular functions}},
  author={Feige, U. and Mirrokni, V. and Vondrak, J.},
  booktitle={Proceedings of 48th Annual IEEE Symposium on Foundations of Computer Science (FOCS)},
  year={2007}
}

@inproceedings{GoemansHIM09,
	Author = {Goemans, Michel X. and Harvey, Nicholas J. A. and
		Iwata, Satoru and Mirrokni, Vahab S.},
	Booktitle = SODA,
	Pages = {535-544},
	Title = {Approximating submodular functions everywhere},
	Year = {2009}}

@article{BuchbinderFNS15,
  author    = {Niv Buchbinder and
               Moran Feldman and
               Joseph Naor and
               Roy Schwartz},
  title     = {A Tight Linear Time (1/2)-Approximation for Unconstrained Submodular
               Maximization},
  journal   = {{SIAM} J. Comput.},
  volume    = {44},
  number    = {5},
  pages     = {1384--1402},
  year      = {2015},
  url       = {http://dx.doi.org/10.1137/130929205},
  doi       = {10.1137/130929205},
  note      = {Preliminary version in Proc.\ of IEEE FOCS 2012.}
}

@inproceedings{BalcanHI12,
    author = {Maria-Florina Balcan and Nicholas J. A. Harvey and Satoru Iwata},
    title = {Learning symmetric non-monotone submodular functions},
    booktitle = {NIPS Workshop on Discrete Optimization in Machine Learning},
    year = {2012},
    url = {https://www.cs.ubc.ca/~nickhar/papers/SymmetricSubmodular/LearningSymmetric.pdf}
    }

@inproceedings{BalkanskiS18,
  title={The adaptive complexity of maximizing a submodular function},
  author={Balkanski, Eric and Singer, Yaron},
  booktitle={Proceedings of the 50th annual ACM SIGACT symposium on theory of computing},
  pages={1138--1151},
  year={2018}
}

@inproceedings{BalkanskiRS19a,
  title={An exponential speedup in parallel running time for submodular maximization without loss in approximation},
  author={Balkanski, Eric and Rubinstein, Aviad and Singer, Yaron},
  booktitle={Proceedings of the Thirtieth Annual ACM-SIAM Symposium on Discrete Algorithms},
  pages={283--302},
  year={2019},
  organization={SIAM}
}

@inproceedings{ChenFK19,
  title={Unconstrained submodular maximization with constant adaptive complexity},
  author={Chen, Lin and Feldman, Moran and Karbasi, Amin},
  booktitle={Proceedings of the 51st Annual ACM SIGACT Symposium on Theory of Computing},
  pages={102--113},
  year={2019}
}

@article{Feige98,
  title={A threshold of ln n for approximating set cover},
  author={Feige, Uriel},
  journal={Journal of the ACM (JACM)},
  volume={45},
  number={4},
  pages={634--652},
  year={1998},
  publisher={ACM New York, NY, USA}
}

@article{GLS,
  title={The ellipsoid method and its consequences in combinatorial optimization},
  author={Gr{\"o}tschel, Martin and Lov{\'a}sz, L{\'a}szl{\'o} and Schrijver, Alexander},
  journal={Combinatorica},
  volume={1},
  number={2},
  pages={169--197},
  year={1981},
  publisher={Springer}
}

@inproceedings{li2020polynomial,
  title={A polynomial lower bound on adaptive complexity of submodular maximization},
  author={Li, Wenzheng and Liu, Paul and Vondr{\'a}k, Jan},
  booktitle={Proceedings of the 52nd Annual ACM SIGACT Symposium on Theory of Computing},
  pages={140--152},
  year={2020}
}

@inproceedings{balkanski2019optimal,
  title={An optimal approximation for submodular maximization under a matroid constraint in the adaptive complexity model},
  author={Balkanski, Eric and Rubinstein, Aviad and Singer, Yaron},
  booktitle={Proceedings of the 51st Annual ACM SIGACT Symposium on Theory of Computing},
  pages={66--77},
  year={2019}
}

@inproceedings{chekuri2019parallelizing,
  title={Parallelizing greedy for submodular set function maximization in matroids and beyond},
  author={Chekuri, Chandra and Quanrud, Kent},
  booktitle={Proceedings of the 51st Annual ACM SIGACT Symposium on Theory of Computing},
  pages={78--89},
  year={2019}
}

@inproceedings{ene2019submodular,
  title={Submodular maximization with matroid and packing constraints in parallel},
  author={Ene, Alina and Nguyen, Huy L. and Vladu, Adrian},
  booktitle={Proceedings of the 51st annual ACM SIGACT symposium on theory of computing},
  pages={90--101},
  year={2019}
}

@article {there-are-few-representable-matroids,
    AUTHOR = {Nelson, Peter},
     TITLE = {Almost all matroids are nonrepresentable},
   JOURNAL = {Bull. Lond. Math. Soc.},
  FJOURNAL = {Bulletin of the London Mathematical Society},
    VOLUME = {50},
      YEAR = {2018},
    NUMBER = {2},
     PAGES = {245--248},
      ISSN = {0024-6093,1469-2120},
   MRCLASS = {05B35},
  MRNUMBER = {3830117},
MRREVIEWER = {Eva\ Ferrara Dentice},
       DOI = {10.1112/blms.12141},
       URL = {https://doi.org/10.1112/blms.12141},
}

\appendix

\section{Relationship of submodular and XOS connectivity functions}
\label{sec:XOS-submod}

Here we prove the following.

\begin{lemma}
\label{lem:XOS-submod}
Every submodular connectivity function is also an XOS connectivity function.
\end{lemma}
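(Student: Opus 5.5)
We need to show: every submodular connectivity function $f$ (nonnegative, symmetric, submodular, $f(\emptyset)=0$) can be written as $f(S)=\max_i |w_i(S)|$ with $w_i(V)=0$.

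The plan is to use the base polytope of $f$. Since $f$ is submodular with $f(\emptyset)=0$, Edmonds' greedy algorithm shows the following. For every $A\subset V$ there is a vector $w$ in the base polytope
\[
B(f)=\{w\in\R^V: w(T)\le f(T)\ \forall T\subset V,\ w(V)=f(V)\}
\]
with $w(A)=f(A)$. To get it, order the elements of $A$ first and then those of $V\setminus A$, and take the greedy marginal vector. By symmetry, $f(V)=f(\emptyset)=0$, so every $w\in B(f)$ satisfies $w(V)=0$. Let $I$ be the finite set of greedy vertices of $B(f)$, one for each permutation.

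We check $f(S)=\max_{w\in I}|w(S)|$ for every $S$. The lower bound $\ge$ is immediate from the greedy vertex associated with $S$, which gives $w(S)=f(S)\ge 0$.

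For the upper bound, take any $w\in B(f)$. First, $w(S)\le f(S)$ directly from the definition of $B(f)$. Second, $-w(S)=w(V\setminus S)-w(V)=w(V\setminus S)\le f(V\setminus S)=f(S)$, using $w(V)=0$ and symmetry. Hence $|w(S)|\le f(S)$.

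Together these give $f(S)=\max_{w\in I}|w(S)|$ with each $w(V)=0$, which is exactly Definition~\ref{def:XOS-connectivity}.

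The only nontrivial ingredient is the greedy fact that the prefix set $A$ is tight: telescoping gives $w(A)=f(A)$, and submodularity gives feasibility, $w(T)\le f(T)$ for all $T$. This is standard (Edmonds), so I would cite it, e.g.\ from \cite{schrijver-comb-opt}, rather than reprove it. Nonnegativity of $f$ is not even needed beyond what symmetry and $f(\emptyset)=0$ already force.
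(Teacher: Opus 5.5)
Your proof is correct and is essentially the paper's argument: the paper also takes the greedy vector $w^\pi$ for each permutation, uses submodularity to get $w^\pi(T)\le f(T)$ with equality on prefixes, and then uses $w^\pi(V)=0$ together with symmetry, via $-w^\pi(S)=w^\pi(V\setminus S)\le f(S)$, to conclude $f(S)=\max_\pi |w^\pi(S)|$. The only difference is presentational: you phrase it through the base polytope and cite Edmonds for feasibility, whereas the paper verifies the greedy inequality inline.
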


\begin{proof}
Let $f:2^{[n]} \to \IR$ be a submodular connectivity function. For every permutation $\pi$ on $[n]$, define $w^\pi_{\pi(i)} = f(\pi([i])) - f(\pi([i-1]))$. By submodularity, we have $f(S) \geq \sum_{i \in S} w^\pi_i$ for every permutation $\pi$, because $f(S \cap \pi([i])) - f(S \cap \pi([i-1])) \geq f(\pi([i])) - f(\pi([i-1]))$ for $i \in S$. Also, equality is attained when $S$ is a prefix of the permutation $\pi$. Therefore, $f(S) = \max_\pi \sum_{i \in S} w^\pi_i$.

If $f(\emptyset) = f([n]) = 0$, we have $\sum_{i=1}^{n} w^\pi_i = 0$. Also, since $f$ is symmetric, 
$f(S) = f(V \setminus S) = \max_\pi \sum_{i \in V \setminus S} w^\pi_i = \max_\pi (-\sum_{i \in S} w^\pi_i)$. Hence, the formula is equally valid if we write it as $f(S) = \max_\pi |\sum_{i \in S} w^\pi_i|$.
\end{proof}

It is not true that absolute XOS functions subsume nonnegative submodular functions.
Observe that every XOS function satisfies $|f(S) - f(V \setminus S)| \leq f(\emptyset) + f(V)$ 
for all $S\subseteq V$. However, this is not satisfied by all submodular functions, 
for example the cut function of a single directed edge: $f(\emptyset) = f(\{2\}) = f(\{1,2\}) = 0$, $f(\{1\}) = 1$. 

Nevertheless, the existence of universal families for absolute XOS functions is the most general problem here: It implies the same result for XOS connectivity functions, and by Lemma~\ref{lem:XOS-submod}, we obtain the same results for submodular connectivity functions. By our discussion in Section~\ref{sec:reduction-to-submodular-connectivity-functions}, this also implies a universal family for all nonnegative submodular functions, while losing a factor of $2$.

\section{Lower bound for $k$-wise independent distributions}

Here we show a more general construction showing that even $k$-wise independent distributions cannot recover the maximum of a nonnegative submodular function within a factor better than $O(\frac{1}{\sqrt{\log n}})$.

\begin{theorem}
\label{thm:k-wise}
For every fixed $k \geq 2$ and infinitely many $n > k$ there exists a $k$-wise independent distribution $\cD$ on $2^V$ and a nonzero submodular connectivity function $f:2^V \to \IR$,
where $|V| = n$, such that

    \begin{equation*}
        \max_{H \in {\rm supp}(\cD)} f(H) = O\left( \frac{\log k}{\sqrt{\log n}} \right) \max_{A \subset V} f(A).
    \end{equation*}
\end{theorem}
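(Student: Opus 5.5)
We generalize the Reed--Muller construction from the pairwise case. We take $V = \IF_2^d$ with $n = 2^d$, and let $\cD$ be the uniform distribution on sets of the form
\begin{equation*}
H_{g} = \{x \in \IF_2^d : g(x) = 0\},
\end{equation*}
where $g$ ranges over all polynomials of degree at most $s$ in $d$ variables (a Reed--Muller code of order $s$). Here $s$ is chosen so that the dual distance of $RM(s,d)$ exceeds $k$. The dual of $RM(s,d)$ is $RM(d-s-1,d)$, with minimum distance $2^{s+1}$, so $s = \lceil \log_2 (k+1)\rceil - 1 = O(\log k)$ suffices. By the standard fact that a linear code of dual distance greater than $k$ yields a $k$-wise independent distribution, the evaluation vector $(g(x))_{x\in V}$ of a uniformly random $g$ is $k$-wise uniform. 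Hence membership of any $k$ points in $H_g$ is independent with probability $1/2$ each. For $s=1$ this recovers exactly the hyperplane family of the previous theorem.

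For $f$ we use the same matroid connectivity function as before: $d = 2m+1$, $P$ the monomials of degree at most $m$, $M$ the evaluation matrix, $\rho$ its rank function, and $f(S) = \rho(S)+\rho(V\setminus S)-\rho(V)$. The earlier argument ($MM^T=0$ and full rank) already gives $\max_A f(A) = n/2$. It remains to bound $f(H_g)$ for every polynomial $g$ of degree at most $s$. The quantity $\rho(H_g)$ is the dimension of the space $RM(m,d)$ restricted to the zero set $Z(g)$, i.e.\ $\dim RM(m,d) - \dim\{p \in RM(m,d): p|_{Z(g)}=0\}$. Every $p = g\cdot q$ with $\deg q \le m - s$ vanishes on $Z(g)$, and likewise $(1+g)q$ vanishes on $Z(1+g)=V\setminus H_g$. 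This suggests bounding
\begin{equation*}
f(H_g) = \tfrac{n}{2} - \dim \ker_{Z(g)} - \dim \ker_{Z(1+g)}.
\end{equation*}
We use a lower bound on the kernels: the map $q \mapsto gq$ restricted to $RM(m-s,d)$ has kernel consisting of functions supported on $Z(1+g)$, and similarly for $1+g$. A cleaner route is to note that $\{gq\} + \{(1+g)q'\}$ over $q,q' \in RM(m-s,d)$ spans $RM(m-s,d)$, since $q = gq + (1+g)q$. Then
\begin{equation*}
\dim \ker_{Z(g)} + \dim \ker_{Z(1+g)} \ge \dim RM(m-s,d),
\end{equation*}
because the two kernels intersect only in $0$: a function vanishing on both sets is zero. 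This gives
\begin{equation*}
f(H_g) \le \tfrac n2 - \sum_{i\le m-s}\binom{d}{i} = \sum_{i=m-s+1}^{m}\binom{d}{i} \le s\binom{d}{m} = O\!\left(s\,\frac{2^d}{\sqrt d}\right).
\end{equation*}
Dividing by $n/2$ yields a ratio of $O(s/\sqrt{\log n}) = O(\log k/\sqrt{\log n})$, as claimed.

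The main obstacle is making the kernel-dimension bound rigorous. The sum $\ker_{Z(g)} \oplus \ker_{Z(1+g)}$ sits inside $RM(m,d)$, and the decomposition $q = gq + (1+g)q$ uses products of degree at most $m$, so we must check that both pieces lie in $RM(m,d)$ (which holds since $\deg g \le s$) and that their sum really contains a copy of $RM(m-s,d)$. Directness of the sum follows because $Z(g)$ and $Z(1+g)$ partition $V$. The remaining details are checking the $k$-wise independence via the dual distance, and ensuring $n>k$ and $m \ge s$, which hold for large $d$.
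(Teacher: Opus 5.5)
Your proof is correct, and the bound on $f(H_g)$ is the paper's argument. The paper also injects $q\mapsto(gq,(1+g)q)$ from $\mathrm{RM}(m-s,d)$ into the two kernels to get $f(H)\le\dim\cP_m-\dim\cP_{m-s}$.

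The genuine difference is the $k$-wise independent distribution.
\begin{itemize}
\item The paper uses the classical construction $H_a=\{x:\Tr(P_a(x))=1\}$, where $P_a$ is a random polynomial of degree $k-1$ over $\F_{2^d}$. It gets $k$-wise independence from a Vandermonde argument. It then needs a separate computation, via linearity of the Frobenius maps and the binary expansion of exponents, to show that each $q_a$ has Boolean degree at most $\lceil\log_2 k\rceil$.
\item You sample uniformly from the whole code $\mathrm{RM}(s,d)$ with $s=\lceil\log_2(k+1)\rceil-1=\lfloor\log_2 k\rfloor$. Low degree is then built in, and $k$-wise independence follows from the dual-distance fact that $\mathrm{RM}(s,d)^\perp=\mathrm{RM}(d-s-1,d)$ has minimum distance $2^{s+1}>k$.
\end{itemize}

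Your route is conceptually cleaner and avoids field-trace machinery. It also reuses the exact hyperplane family of the pairwise case when $s=1$.

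The cost is the support size. Yours is $2^{\dim \mathrm{RM}(s,d)}=2^{\Theta(\log^s n)}$, which is superpolynomial once $k\ge 4$. The paper's support has at most $n^k$ sets, which suits its motivation of polynomial-size universal families, though the statement as given does not require it.

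The two constructions are closely related. The trace polynomials actually have Boolean degree at most the largest binary weight of $j\le k-1$, which is $\lfloor\log_2 k\rfloor$. Since $\mathrm{RM}(s,d)$ contains constants, the paper's support is a subfamily of yours. Both proofs bound $f$ on every zero set of a degree-$\le s$ polynomial, so they differ only in which low-degree $k$-wise independent family they exhibit.

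One small slip, in the aside you discarded: the kernel of $q\mapsto gq$ consists of functions vanishing on $Z(1+g)$, that is, supported on $Z(g)$, not on $Z(1+g)$. This does not affect the direct-sum argument you actually use.
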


\subsection{Preliminaries}

 We start with some algebraic preliminaries.

\subsubsection*{Finite fields of characteristic $2$, Frobenius maps, and trace}

Let $\F_{2^d}$ denote a finite field of size $2^d$. Its elements can be identified with elements in the vector space $V:=\F_2^d$ in such a way that the addition operation is the same in both structures.
The field structure supplies an additional multiplication operation $\F_{2^d} \times \F_{2^d}  \to \F_{2^d}$.

A known fact is that for $s\ge 0$, the Frobenius map $\Phi_s:\F_{2^d} \to \F_{2^d}$,
\[
  \Phi_s(x):=x^{2^s}
\]
is linear as a map on the vector space $\F_2^d$. Indeed, characteristic $2$ gives
$(u+v)^2=u^2+v^2$, and iteration gives $(u+v)^{2^s}=u^{2^s}+v^{2^s}$.
Also $(cu)^{2^s}=c\,u^{2^s}$ for $c\in\F_2$. Consequently, every coordinate
of $x^{2^s}$ is a linear form in the $d$ binary coordinates of $x$.
The exponent being a power of the characteristic is essential here.

The field trace on $\F_{2^d}$ is the map
\[
  \Tr:\F_{2^d} \longrightarrow\F_2,\qquad
  \Tr(z):=\sum_{s=0}^{d-1}z^{2^s}.
\]
It is linear on $\F_2^d$ by the property of the Frobenius maps above. Moreover,
\[
  (\Tr(z))^2=z^2+z^{2^2}+\cdots+z^{2^d}=\Tr(z),
\]
since the cross terms disappear due to characteristic $2$ and $z^{2^d}=z$ by the cyclic structure of the multiplicative group of $\F_{2^d}$.
Hence, the values of $\Tr(z)$ lie in $\F_2 = \{0,1\}$. 
$\Tr(z)$ is not identically zero:
it is a polynomial of  degree $2^{d-1}<|F|$ and therefore cannot vanish on all of $\F_{2^d}$.
Thus $\Tr$ is surjective, its kernel has codimension one, and for a random $Z$ in $\F_{2^d}$,
\[
  Z\text{ is uniform in } \F_{2^d}
  \quad\Longrightarrow\quad
  \Tr(Z)\text{ is a uniform bit in $\{0,1\}$}.
\]

\subsection*{Boolean polynomial spaces}

Recall that $V = \F_2^d$. We denote $n = 2^d$.
For $0\le s\le d$, let $\cP_s$ be the vector space of functions on $V$
representable by multilinear polynomials of degree at most $s$:
\[
  \cP_s:=
  \left\{  g:V \to \F_2; \
    g(x)=\sum_{\substack{T\subset[d]\\|T|\le s}}
       c_T\prod_{i\in T}x_i, \ c_T\in\F_2
  \right\}.
\]
We identify such a polynomial with its vector of $2^d$ evaluations $(g(x): x \in V)$,
or with its vector of coefficients $(c_T: T \subset [d], |T| \leq s)$.
By the second representation, and linear independence of the monomials in $\cP_s$ over $\IF_2$, we have
\[
  \dim\cP_s=\sum_{i=0}^s\binom di.
\]
For Boolean functions $q,g:V \to \F_2$, we define pointwise multiplication:
$(qg)(x)=q(x)g(x)$. Hence
\[
  q\in\cP_r,\quad g\in\cP_s
  \quad\Longrightarrow\quad
  qg\in\cP_{r+s},
\]
after multilinearization using $x_i^2=x_i$.

\subsection{A $k$-wise independent distribution of small support}

Choose independent uniform coefficients
$a_0,\ldots,a_{k-1}\in \F_{2^d}$ and define
\[
  P_a(x):=\sum_{j=0}^{k-1}a_jx^j,\qquad
  q_a(x):=\Tr(P_a(x)),\qquad
  H_a:=\{x\in \F_{2^d}:q_a(x)=1\}.
\]
The random coefficient tuple $a$ induces a distribution $\mathcal D$ on
subsets $H_a\subset V$.

\begin{theorem}
The membership variables
$\bigl(\mathbf 1[x\in H_a]\bigr)_{x\in V}$ are unbiased and $k$-wise
independent. The distribution has support at most $n^k$ and can be sampled
with $kd$ random bits.
\end{theorem}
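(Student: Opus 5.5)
Fix $k$ distinct points $x_1,\dots,x_k \in \F_{2^d}$ and a target pattern $(b_1,\dots,b_k) \in \F_2^k$; the goal is to show $\IP(q_a(x_\ell) = b_\ell \ \forall \ell) = 2^{-k}$. Unbiasedness is the case $k=1$ of the same statement, or it follows by marginalizing.

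The first step is to consider the evaluation map $a=(a_0,\dots,a_{k-1}) \mapsto (P_a(x_1),\dots,P_a(x_k)) \in \F_{2^d}^k$. This map is $\F_{2^d}$-linear, and its matrix is the $k\times k$ Vandermonde matrix $(x_\ell^j)$. Since the $x_\ell$ are distinct, this matrix is invertible, so the map is a bijection of $\F_{2^d}^k$. Because $a$ is uniform on $\F_{2^d}^k$, the vector $(P_a(x_1),\dots,P_a(x_k))$ is uniform on $\F_{2^d}^k$, i.e.\ its coordinates $Z_\ell = P_a(x_\ell)$ are independent and uniform in $\F_{2^d}$. This needs $k \le 2^d = n$, which is guaranteed by $n>k$.

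The second step applies the trace to each coordinate. By the preliminaries, $\Tr(Z)$ is a uniform bit whenever $Z$ is uniform, since $\Tr$ is a surjective $\F_2$-linear map and all its fibres have size $2^{d-1}$. Functions of independent variables are independent, so $(\Tr(Z_1),\dots,\Tr(Z_k)) = (q_a(x_1),\dots,q_a(x_k))$ are i.i.d.\ uniform bits. This gives exactly that the indicators $\mathbf 1[x_\ell\in H_a]$ are unbiased and $k$-wise independent. One small point needs care: the identification of $V$ with $\F_{2^d}$ is fixed once and for all, so "$x \in V$" and "$x \in \F_{2^d}$" refer to the same elements.

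For the remaining claims, the distribution is the image of the uniform distribution on $a \in \F_{2^d}^k$, a set of size $(2^d)^k = n^k$. Hence its support has at most $n^k$ sets, and sampling it requires choosing $a$, which takes $kd$ uniform random bits.

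The only step with real content is the Vandermonde bijection argument; everything else is bookkeeping. I do not expect a genuine obstacle.
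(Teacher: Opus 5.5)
Your proof is correct and follows the paper's argument. Invertibility of the Vandermonde matrix makes the evaluations independent uniform elements of $\F_{2^d}$, the surjective $\F_2$-linear trace turns them into independent uniform bits, and the support and random-bit counts come from $|\F_{2^d}^k| = n^k$. The only cosmetic difference is that you use the square $k\times k$ Vandermonde matrix and then marginalize to handle sets of $t<k$ points (which needs $n \ge k$, as you note). The paper instead treats any $t \le k$ points directly via the $t\times k$ Vandermonde matrix of rank $t$.
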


\begin{proof}
Fix distinct $x_1,\ldots,x_t\in \F_{2^d}$, where $t\le k$. The evaluation map
\[
  (a_0,\ldots,a_{k-1})
  \longmapsto
  (P_a(x_1),\ldots,P_a(x_t))
\]
has a $t$-row Vandermonde matrix. Its first $t$ columns have determinant
$\prod_{i<j}(x_j-x_i)\ne0$, so the map has rank $t$. Therefore the
$t$ field values are independent and uniform in $\F_{2^d}$. Applying the
nonzero linear map $\Tr$ coordinatewise gives $t$ independent uniform
bits. Finally, there are $|\F_{2^d}|^k=n^k$ coefficient tuples,
and $\log_2 |\F_{2^d}|^k = k \log_2 n = kd$.
\end{proof}

\subsection{Boolean degree of the sampled polynomials}

It might seem at first sight that the construction above leads to polynomials
of degree $k$. However, with some care we will see that the degree
important for us is actually $\log k$.
Fix the coefficients $a_j$ and view $q_a$ as a function of the $d$ binary
coordinates of $x= (x_1,\ldots,x_d)$. If
$j=\sum_s\epsilon_s2^s$, $\epsilon_s\in\{0,1\}$
then
\[
  x^j=\prod_{s:\epsilon_s=1}x^{2^s}.
\]
Each Frobenius factor $x^{2^s}$
is linear as a map on the vector space $\F_2^d$, and
field multiplication is bilinear on $\F_2^d$. Thus the degree of $x^j$
as a polynomial in $(x_1,\ldots,x_d)$ over $\F_2$ is
\[
  \deg_{\F_2}(x^j)\le \sum_s \epsilon_s \leq \lceil \log_2 k \rceil.
\]
Multiplication by $a_j$, application of $\Tr$, and summation over $j$ are
all linear operations on $\F_2^d$, so
\[
  q_a\in\cP_{r_k},
  \qquad
  r_k := \lceil \log_2 k \rceil.
\]
This is multilinear Boolean degree in the variables $x_1,\ldots,x_d \in \F_2$;
observe that the ordinary univariate degree of $P_a$ as a polynomial over $\F_{2^d}$ is $k-1$ with probability $1-1/2^d$ (whenever $a_{k-1} \neq 0$).

\subsection{The submodular lower-bound construction}

Assume $d=2m+1$ and $r_k\le m$, as holds for every fixed $k$ once $d$
is sufficiently large. Let $M$ be the binary matrix whose rows indexed by $T$
are the evaluation vectors of each monomial $\prod_{i \in T} x_i$ of degree
$|T| \leq m$ on $V$. Its row space is $\cP_m$, and
\[
  R:=\operatorname{rank}(M)=\dim\cP_m
    =\sum_{i=0}^m\binom di
    =2^{d-1}=\frac n2.
\]
For $S\subset V$, let $\rho(S)$ be the rank of the columns of $M$
indexed by $S$, and define
\[
  f(S):=\rho(S)+\rho(V\setminus S)-\rho(V).
\]
This is a submodular connectivity function:
The function is symmetric by definition. It is nonnegative because
$\rho(V)\le\rho(S)+\rho(V\setminus S)$, it is submodular because
both $S\mapsto\rho(S)$ and $S\mapsto\rho(V\setminus S)$ are submodular, and it satisfies
$f(\emptyset) = f(V) = \rho(\emptyset) + \rho(V) - \rho(V) = 0$. Also, recall that $\rho(V) = \dim \cP_m = R$.

The rows of $M$ are mutually orthogonal under
$\langle u,v\rangle=\sum_{x\in V}u(x)v(x)$ over $\F_2$. Indeed, the
product of two monomials of degree at most $m$ has degree at most
$2m=d-1$, so some coordinate is absent; summing over that coordinate
gives zero in $\F_2$. Hence $M M^T = 0$. Since $\dim\cP_m=n/2$, the space $\cP_m$ equals
its orthogonal complement.

We claim that there exist two sets $A, V \setminus A$ such that $\rho(S) = \rho(V \setminus S) = R = \frac{n}{2}$. The rank of $M$ is $R = n/2$. Let $A$ be any subset of $R$ columns which are linearly independent. The matrix $M_A$ restricted to columns indexed by $A$ is invertible, and
we have $\det(M_A M_A^T) = (\det (M_A))^2 \neq 0$. On the other hand, we have
$M_A M_A^T + M_{V \setminus A} M_{V \setminus A}^T = M M^T = 0$. Hence
$(\det (M_{V \setminus A}))^2 = \det(M_{V \setminus A} M_{V \setminus A}^T) = \det(M_A M_A^T) \neq 0$
and $M_{V \setminus A}$ is also invertible, which means $\rho(A) = \rho(V \setminus A) = R$.
To summarize,
\[
  f(A) = \rho(A) + \rho(V \setminus A) - R =  R = \frac n2.
\]

On the other hand, every set $A \subset V$ satisfies $f(A) \leq R + R - R = R = \frac{n}{2}$.
Hence, $\max_{A \subset V} f(A) = \frac{n}{2}$.

Now fix any sampled function $q=q_a\in\cP_{r_k}$ and let
$H=\supp(q)=\{x:q(x)=1\}$. For $S\subset V$, define
\[
  W_S:=\{h\in\cP_m:\supp(h)\subset S\}.
\]
For every $g\in\cP_{m-r_k}$, set
\[
  h_1(x) := q(x) g(x),\qquad h_0(x) := (1+q(x)) g(x).
\]
Then $h_1,h_0\in\cP_m$, with
$\supp(h_1)\subset H$ and
$\supp(h_0)\subset V\setminus H$. Moreover,
$h_1+h_0=g$, so
\[
  g\longmapsto(h_1,h_0)
\]
is an injective linear map from $\cP_{m-r_k}$ into
$W_H\oplus W_{V\setminus H}$. Hence
\[
  \dim W_H+\dim W_{V\setminus H}\ge\dim\cP_{m-r_k}.
\]
Restriction of functions in $\cP_m$ to $S$ has rank $\rho(S)$ and kernel
$W_{V\setminus S}$. Therefore
\[
  \rho(S)=\dim\cP_m-\dim W_{V\setminus S}.
\]
Substituting this identity into the definition of $f$ gives
\begin{align*}
  f(H)
   &=\dim\cP_m-\dim W_H-\dim W_{V\setminus H}\\
   &\le\dim\cP_m-\dim\cP_{m-r_k}\\
   &=\sum_{i=m-r_k+1}^{m}\binom di
    =O\!\left(\frac{r_k 2^d}{\sqrt d}\right).
\end{align*}
The last estimate uses the standard bound on binomial coefficients,
$\binom{d}{i}=O(2^d/\sqrt d)$ for all $i$.

Since $d=\log_2n$ and $r_k=O(\log k)$, every set in the support of
$\mathcal D$ satisfies
\[
  f(H)=O\!\left(\frac{n\log k}{\sqrt{\log n}}\right).
\]
Together with $\max_Af(A)=n/2$, this yields
\[
  \frac{\max_{A\subset V}f(A)}
       {\max_{H\in\supp(\mathcal D)}f(H)}
  =\Omega\!\left(\frac{\sqrt{\log n}}{\log k}\right).
\]
For fixed $k$, the denominator $\log k$ is constant, proving the
$\Omega(\sqrt{\log n})$ lower bound.

\end{document}